\documentclass[11pt]{article}
\usepackage{paralist}
\usepackage{color}
\usepackage{bm}

\usepackage{soul}

\usepackage[cm]{fullpage}
\usepackage{amsfonts}
\usepackage{amssymb}

\usepackage{amsmath}
\usepackage{constants}
\usepackage{amsthm}
\makeatletter
\newtheorem*{rep@theorem}{\rep@title}
\newcommand{\newreptheorem}[2]{%
	\newenvironment{rep#1}[1]{%
		\def\rep@title{#2 \ref{##1}}%
		\begin{rep@theorem}}%
		{\end{rep@theorem}}}
\makeatother


\usepackage{ifpdf}
\newcommand{\mydriver}{hypertex}
\ifpdf
\renewcommand{\mydriver}{pdftex}
\fi
\usepackage[breaklinks,\mydriver]{hyperref}


\usepackage[margin=1in]{geometry}

\theoremstyle{plain}
\newtheorem{theorem}{Theorem}[section]
\newtheorem{fact}[theorem]{Fact}
\newtheorem{lemma}[theorem]{Lemma}

\newtheorem{claim}[theorem]{Claim}
\newtheorem{definition}[theorem]{Definition}

\usepackage{algorithm}
\usepackage{algpseudocode}
\usepackage[normalem]{ulem}

\algtext*{EndWhile}
\algtext*{EndIf}
\algtext*{EndFor}

\DeclareMathOperator{\mincut}{mincut}
\DeclareMathOperator{\capacity}{cap}
\DeclareMathOperator{\lca}{lca}
\DeclareMathOperator{\con}{cong}
\DeclareMathOperator{\load}{load}

\newenvironment{pfof}[1]{\begin{proof}[\textbf{Proof of #1: }]}{\end{proof}}

\theoremstyle{definition}

\title{Vertex Sparsification in Trees\footnote{An extended abstract will appear in Proceedings of the 14th Workshop on Approximation and Online Algorithms (WAOA) 2016.}}
\author{Gramoz Goranci\footnote{University of Vienna, Faculty of Computer Science, Vienna, Austria. E-mail: \texttt{gramoz.goranci@univie.ac.at}. Part of this work was done while the author was a master's student at TU M\"unchen.}
	\and
	Harald R\"acke\footnote{Institut f\"ur Informatik, Technische Universit\"at M\"unchen, Garching, Germany. E-mail: \texttt{raecke@in.tum.de}.}
}
\date{\today}
\begin{document}
\begin{titlepage}
\maketitle

\thispagestyle{empty}
\begin{abstract}
Given an unweighted tree $T=(V,E)$ with terminals $K \subset V$, we show how to obtain a
$2$-quality vertex flow and cut sparsifier $H$ with $V_H = K$. We prove that our result is essentially tight by providing a~$2-o(1)$ lower-bound on the quality of any cut sparsifier for stars.

In addition we give improved results for quasi-bipartite graphs. First, we show how to obtain a $2$-quality flow sparsifier with $V_H = K$ for such graphs. We then consider the other extreme and construct exact sparsifiers of size $O(2^{k})$, when the input graph is unweighted.
\end{abstract}

\end{titlepage}
\section{Introduction}
Graph sparsification is a technique to deal with large input graphs by
``compressing'' them into smaller graphs while preserving important
characteristics, like cut values, graph spectrum etc. Its algorithmic value is apparent, since these smaller representations can be computed in a preprocessing step of an algorithm, thereby greatly improving performance. 

Cut sparsifiers (\cite{BenczurK96}) and spectral sparsifiers (\cite{SpielmanT04}) aim at
reducing the number of edges of the graph while approximately preserving cut
values and graph spectrum, respectively. These techniques are used in a variety
of fast approximation algorithms, and are instrumental in the development of
nearly linear time algorithms.

In vertex sparsification (\cite{HagerupKNR98,moitra09,leighton,raecke2014,distancepreserving,KammaKN2015,cheung2016}), apart from reducing the number of edges, the goal is also to reduce the number of vertices of a graph. In such setting,
one is given a large graph $G=(V,E,c)$, together with a relatively small subset
of terminals $K\subseteq V$. The goal is to shrink the graph while preserving
properties involving the terminals. For example, in \emph{Cut Sparsification} one wants to construct a graph $H=(V_H,E_H,c_H)$ (with
$K\subseteq V_H$) such that $H$ preserves mincuts between
terminals up to some approximation factor $q$ (the \emph{quality}). 

Hagerup et al.~\cite{HagerupKNR98} introduced this concept under the term
\emph{Mimicking Networks}, and focused on constructing a (small) graph $H$ that
maintains mincuts exactly. They showed that one can obtain $H$ with
$O(2^{2^{k}})$ vertices, where $k=|K|$. Krauthgamer et al.~\cite{KrauthgamerR13} and Khan et al.~\cite{KhanR14} independently proved that $2^{\Omega(k)}$ vertices are required for some graphs if we want to preserve mincuts exactly.

Moitra~\cite{moitra09} analyzed the setting where the graph $H$ is as
small as possible, namely $V_H=K$. Under this condition, he obtained a
quality $O(\log k/\log\log k)$ cut sparsifier. A lower bound of
$\Omega(\sqrt{\log k}/\log\log k)$ was presented by Makarychev et al.~\cite{mm10}. A strictly stronger notion
than a cut sparsifier, is a \emph{flow sparsifier} that aims at (approximately)
preserving all multicommodity flows between terminals. The upper bound
of~\cite{moitra09} also holds for this version, but the lower
bound is slightly stronger: $\Omega(\sqrt{\log k / \log\log k})$.

Due to the lower bounds on the quality of sparsifiers with $V_H=K$, the recent focus has been on obtaining better guarantees with slightly larger sparsifiers.
Chuzhoy~\cite{juliasteiner} obtained a constant quality flow sparsifier of size
$C^{O(\log\log C)}$, where $C$ is the total weight of the edges incident to
terminal nodes. Andoni et al.~\cite{andoni} obtained quality of $(1+\varepsilon)$ and
size $O(\operatorname{poly}(k/\varepsilon))$ for \emph{quasi-bipartite} graphs,
i.e., graphs where the terminals form an independent set. This is interesting
since these graphs serve as a lower bound example for Mimicking Networks, i.e.,
in order to obtain an exact sparsifier one needs size at least $2^{\Omega(k)}$.

In this paper we study flow and cut sparsifiers for trees. Since, for tree
networks it is immediate to obtain a sparsifier of size $O(k)$ and quality $1$,
we consider the problem of designing flow and cut sparsifiers with $V_H=K$ as
in the original definition of Moitra. In
Section~\ref{section: upper-bound} we show how to design such a flow sparsifier for unweighted
trees with quality $2$. In Section~\ref{section: lower-bound} we prove that this result is essentially tight by establishing a lower bound. Concretely, we prove that even for unweighted stars it is not possible to obtain cut sparsifiers with quality $2-o(1)$.

As a further applicaton of our techniques, we apply them to quasi-bipartite
graphs (Section \ref{section: applications}). We first obtain a 2-quality flow sparsifier with $V_H=K$ for such
graphs. In addition we explore the other extreme and construct exact sparsifiers
of size $O(2^k)$, if the input graph is unweighted. This shows that even though quasi-bipartite graphs serve as
lower bound instances for Mimicking Networks they are not able to close the
currently large gap between the upper bound of $O(2^{2^k})$ and the lower bound
of $2^{\Omega(k)}$ on the size of Mimicking Networks.

Finally we obtain hardness results for the problem of deciding whether a
graph $H$ is a sparsifier for a given unweighted tree $T$. We prove that this
problem is co-$\mathcal{NP}$-hard for cut sparsifiers, based on Chekuri et al.~\cite{ChekuriSOS07}. For flow sparsifiers we show that for
a single-source version, where the sparsifier has to preserve flows in which all
demands share a common source, the problem is co-$\mathcal{NP}$-hard. See Section \ref{section: hardness} for more details.

\section{Preliminary}

Let $G=(V,E,c)$ be an undirected graph with terminal set $K \subset V$ of
cardinality $k$, where $c: E \rightarrow \mathbb{R}^{+}$ assigns a non-negative
capacity to each edge.
We present two different ways to sparsify the number of vertices in
$G$. 

Let $U \subset V$ and $S \subset K$. We say that a cut $(U, V \setminus U)$ is
$S$-separating if it separates the terminal subset $S$ from its complement $K
\setminus S$, i.e., $U \cap K$ is either $S$ or $K \setminus S$. The cutset
$\delta(U)$ of a cut $(U, V \setminus U)$ represents the edges that have one
endpoint in $U$ and the other one in $V \setminus U$. The cost
$\capacity_G(\delta(U))$ of a cut $(U, V \setminus U)$ is the sum over all
capacities of the edges belonging to the cutset. We let $\text{mincut}_{G}(S, K
\setminus S)$ denote the $S$-separating cut of the minimum cost in
$G$. A graph $H = (V_H, E_H, c_H)$, $K \subset V_H$ is a \emph{vertex cut
  sparsifier} of $G$ with \emph{quality} $q \geq 1$ if: $ \forall S \subset K,
~ \mincut_G(S, K \setminus S) \leq \mincut_H(S, K \setminus S) \leq q \cdot
\mincut_G(S, K \setminus S). $

We say that a (\textit{multi-commodity}) flow $f$ is a \textit{routing} of the
demand function $d$, if for every terminal pair $(x,x')$ it sends $d(x,x')$ units of
flow from $x$ to $x'$. The \textit{congestion} of an edge $e \in E$ incurred by
the flow $f$ is defined as the ratio of the total flow sent along the edge to
the capacity of that edge, i.e., $f(e)/c(e)$. The \textit{congestion of the
  flow} $f$ for routing demand $d$ is the maximum congestion over all edges in
$G$. We let $\con_{G}(d)$ denote the minimum congestion over all flows. 
A graph $H = (V_H, E_H, c_H)$, $K \subset V_H$ is a \emph{vertex flow
  sparsifier} of $G$ with \emph{quality} $q \geq 1$ if for every demand function
$d$, $\con_H(d) \leq \con_G(d) \leq q \cdot \con_H(d). $

We use the following tools about sparsifiers throughout the paper.
\begin{lemma}[\cite{leighton}] \label{lemma: Moitra} If $H = (V_H, E_H, c_H)$, $V_H=K$ is a
  vertex flow sparsifier of $G$, then the quality of $H$ is $q =
  \con_{G}(d_{H})$, where $d_{H}(x,x') : = c_{H}(x,x')$ for all terminal pairs
  $(x,x')$.
\end{lemma}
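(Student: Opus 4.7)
My plan is to show both inequalities $q \geq \con_G(d_H)$ and $q \leq \con_G(d_H)$, which pin down the quality exactly.

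First, I would observe that $\con_H(d_H) = 1$. Indeed, since $V_H = K$, every demand pair $(x,x')$ with $d_H(x,x') > 0$ is itself an edge of $H$ with capacity $c_H(x,x') = d_H(x,x')$, so routing each demand directly along its own edge achieves congestion exactly $1$ on every edge carrying demand. Plugging $d = d_H$ into the sparsifier guarantee $\con_G(d_H) \leq q \cdot \con_H(d_H)$ immediately gives $q \geq \con_G(d_H)$.

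For the matching upper bound, I would take an arbitrary demand $d$ with $\alpha := \con_H(d)$, fix an optimal $H$-routing $f_H$ of $d$ (so $f_H(x,x') \leq \alpha \cdot c_H(x,x') = \alpha \cdot d_H(x,x')$ on every edge $(x,x') \in E_H$), and fix an optimal $G$-routing $g$ of $d_H$ with congestion $\beta := \con_G(d_H)$. I would then compose the two: for each edge $(x,x') \in E_H$, replace the $f_H$-flow traversing it with the restriction of $g$ to the $(x,x')$-component, scaled by the factor $f_H(x,x')/d_H(x,x') \leq \alpha$. The result is a valid flow in $G$ routing $d$, and the load on any $G$-edge $e$ is bounded by $\alpha$ times the load of $g$ on $e$, which is at most $\alpha \beta \cdot c_G(e)$. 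Hence $\con_G(d) \leq \alpha \beta = \con_H(d) \cdot \con_G(d_H)$, giving $q \leq \con_G(d_H)$.

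The main obstacle is the compositional step: one has to verify that the scaled-and-pasted flow really is a feasible routing of $d$ in $G$ and that the per-edge load telescopes cleanly. The key quantitative input is the identity $c_H(x,x') = d_H(x,x')$, which lets the $\alpha$ congestion bound in $H$ be converted into a scaling factor of at most $\alpha$ when substituting the $G$-routing of $d_H$; the rest is linearity of flow superposition.
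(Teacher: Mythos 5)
Your proof is correct. Note that the paper does not include its own proof of this lemma---it is simply cited from Leighton and Moitra \cite{leighton}---but your two-sided argument is the standard one and is essentially what appears in the cited source: routing $d_H$ directly along the edges of $H$ gives $\con_H(d_H)=1$ (and no routing can do better, since for any cut $S$ the demand crossing it equals the $H$-capacity crossing it), so specializing the sparsifier guarantee to $d=d_H$ yields $q\geq\con_G(d_H)$; for the converse, composing an optimal $H$-routing of an arbitrary $d$ with the $G$-routing $g$ of $d_H$---replacing each traversal of an $H$-edge $(x,x')$ by the $(x,x')$-component of $g$ scaled by $f_H(x,x')/d_H(x,x')\leq\con_H(d)$---produces, by linearity, a valid $G$-routing of $d$ whose load on any $G$-edge is at most $\con_H(d)\cdot\con_G(d_H)$ times its capacity, which gives $q\leq\con_G(d_H)$.
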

Let $G_1$ and $G_2$ be graphs on \textit{disjoint} set of vertices with
terminals $K_1 = \{s_1,\ldots,s_k\}$ and $K_2 = \{t_1,\ldots,t_m\}$,
respectively. In addition, let $\phi(s_i) = t_i$, for all $i =1,\ldots,\ell$,
be a one-to-one correspondence between some subset of $K_1$ and $K_2$. The
$\phi$-\emph{merge} (or $2$-\textit{sum}) of $G_1$ and $G_2$ is the graph $G$
with terminal set $K = K_1 \cup \{t_{\ell+1},\ldots,t_m\}$ formed by
identifying the terminals $s_i$ and $t_i$ for all $i = 1,\ldots, \ell$. This
operation is denoted by $G := G_1 \oplus_{\phi} G_2$.
\begin{lemma}[\cite{andoni}, Merging] \label{lemma: Merging} Let $G = G_1 \oplus_{\phi} G_2$.
  Suppose $G_1'$ and $G_2'$ are flow sparsifiers of quality $q_1$ and $q_2$ for
  $G_1$ and $G_2$, respectively. Then $G' = G_1' \oplus_{\phi} G_2'$ is a flow
  sparsifier of quality $\max\{q_1, q_2\}$ for $G$.
\end{lemma}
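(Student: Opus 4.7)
The plan is to prove the two required inequalities $\con_{G'}(d) \leq \con_G(d)$ and $\con_G(d) \leq \max\{q_1,q_2\} \cdot \con_{G'}(d)$ for every demand $d$ on the terminal set of $G$, by decomposing and re-pasting multi-commodity flows at the identified (shared) terminals. The central observation is that since $G_1$ and $G_2$ are on disjoint vertex sets and intersect only at the merged terminals, any flow on $G = G_1 \oplus_\phi G_2$ can be split into a flow supported on the edges of $G_1$ and a flow supported on the edges of $G_2$, meeting only at the shared vertices; the portion of each flow path lying in $G_i$ becomes a demand between its two endpoints in $G_i$ (where an endpoint may be a shared vertex).

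For the easy direction, I would take an optimal routing $f$ of $d$ in $G$ with congestion $C := \con_G(d)$, and split each flow path at the shared vertices into sub-paths, producing flows $f_1$ on $G_1$ and $f_2$ on $G_2$ that route induced demands $d_1$ and $d_2$. The two demands are consistent at the shared terminals (by construction of the split), and agree with $d$ on the non-shared terminals. Since $G_1$ and $G_2$ share no edges, both $f_1$ and $f_2$ have congestion at most $C$, so $\con_{G_i}(d_i) \leq C$ for $i=1,2$. Applying the left sparsifier inequality to $G_i'$ yields $\con_{G_i'}(d_i) \leq \con_{G_i}(d_i) \leq C$; pasting the corresponding optimal flows in $G_1'$ and $G_2'$ at the shared terminals gives a routing of $d$ in $G'$ of congestion at most $C$, as required.

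For the quality direction, the argument is symmetric. I would start with an optimal routing $f'$ of $d$ in $G'$ with congestion $C' := \con_{G'}(d)$, decompose it analogously into flows on $G_1'$ and $G_2'$ routing some consistent demands $d_1$ and $d_2$ of congestion at most $C'$ each, and then invoke the quality guarantees of the two sparsifiers: $\con_{G_i}(d_i) \leq q_i \cdot \con_{G_i'}(d_i) \leq q_i \cdot C'$. Pasting the resulting flows in $G_1$ and $G_2$ back together at the shared terminals produces a feasible routing of $d$ in $G$ of congestion at most $\max\{q_1,q_2\} \cdot C'$.

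The only delicate point is the decomposition and re-pasting step: I need to argue that splitting at shared vertices really does yield valid flows routing consistent demands, and that pasting two flows along shared vertices respects flow conservation. Both facts follow from a standard path-decomposition of the flows together with the fact that the shared vertices are terminals in $G_1$, $G_2$, $G_1'$, $G_2'$, so no conservation constraint is imposed on them within each side. Once this bookkeeping is in place, both directions reduce directly to the sparsifier definition applied side by side.
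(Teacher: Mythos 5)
The paper cites this lemma from Andoni et al.\ and does not give its own proof, so there is nothing internal to compare against. Your argument is the standard proof of the merging lemma and is correct: split an optimal routing at the shared terminals to get induced demands $d_1,d_2$ on the pieces; use $\con_{G_i'}(d_i)\le\con_{G_i}(d_i)$ for the first inequality and $\con_{G_i}(d_i)\le q_i\cdot\con_{G_i'}(d_i)$ for the second; paste the piecewise routings back along the shared vertices and use that $E(G_1')$ and $E(G_2')$ are edge-disjoint. The one point worth spelling out is the bookkeeping you flag at the end: the pasting should be carried out at the level of sub-commodities obtained from a path decomposition of the original routing (equivalently, after re-routing the aggregated demand $d_i$ in the piece-sparsifier, split that routing proportionally among the sub-paths that contributed to each entry of $d_i$). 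That is what guarantees the concatenated flow realizes the \emph{original} demand $d$ rather than some other demand with the same endpoint marginals at the shared vertices. Your phrasing that ``no conservation constraint is imposed'' at the shared vertices is a slight misstatement --- conservation holds, but since those vertices are terminals in each piece they may act as sources/sinks there, which is exactly what permits cutting a through-flow at them.
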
 
\begin{lemma}[Convex Combination of Sparsifiers] \label{lemma: ConvexComb} Let
  $H_i = (V^{*},E_i, c_i)$, $i = 1, \ldots, m$ with $K \subset V^{*}$ be vertex
  flow sparsifiers of $G$. In addition, let $\alpha_1, \alpha_2, ..., \alpha_m$
  be convex multipliers corresponding to $H_i$'s such that $\sum_{i} \alpha_i =
  1$. Then the graph $H' = \sum_{i} \alpha_i \cdot H_i $ is a vertex flow
  sparsifier for $G$.
\end{lemma}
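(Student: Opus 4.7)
The plan is to verify both flow-sparsifier inequalities for $H'$, namely $\con_{H'}(d) \leq \con_G(d)$ and $\con_G(d) \leq q'\con_{H'}(d)$ for every demand $d$; the natural candidate for the quality is $q' = \max_i q_i$, with $q_i$ the quality of $H_i$.

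For the first inequality I would superpose flows. Given any demand $d$, the sparsifier property of each $H_i$ yields a flow $f_i$ on $H_i$ routing $d$ with congestion at most $\con_G(d)$. Because all the $H_i$ live on the common vertex set $V^{*}$, the convex combination $f' := \sum_i \alpha_i f_i$ is a well-defined flow on $H'$; it still routes $d$ (the routing constraints are linear and $\sum_i \alpha_i = 1$), and on each edge $e$
\[
  f'(e) \;=\; \sum_i \alpha_i f_i(e) \;\leq\; \sum_i \alpha_i \con_G(d)\, c_i(e) \;=\; \con_G(d)\, c_{H'}(e),
\]
so the congestion of $f'$ is at most $\con_G(d)$, which gives $\con_{H'}(d)\leq \con_G(d)$.

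For the second inequality, in the case $V^{*} = K$ (which is how the lemma is ultimately used in the paper), my plan is to invoke Lemma~\ref{lemma: Moitra}, which identifies the quality of a $V_H = K$ sparsifier with $\con_G(c_H)$. A scaling-and-superposition argument shows that $d \mapsto \con_G(d)$ is convex, so applying this to $c_{H'} = \sum_i \alpha_i c_i$ yields
\[
  \con_G(c_{H'}) \;\leq\; \sum_i \alpha_i \con_G(c_i) \;=\; \sum_i \alpha_i q_i \;\leq\; \max_i q_i,
\]
which combined with the first inequality exhibits $H'$ as a flow sparsifier of quality at most $\max_i q_i$. The step I expect to be most delicate is extending this to $V^{*} \supsetneq K$, where Lemma~\ref{lemma: Moitra} is unavailable. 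There I would instead work from the LP-dual characterization $\con_H(d) = \max_{\mu} \bigl(\sum_{(x,x')} d(x,x')\,\mu(x,x')\bigr)/\bigl(\sum_e c_H(e)\mu(e)\bigr)$ over semimetrics $\mu$ on $V_H$, exploiting that the denominator for $H'$ is a convex combination of the denominators for the $H_i$'s, and then chaining the per-$i$ sparsifier bounds $\con_G(d)\leq q_i\,\con_{H_i}(d)$ through an appropriate dual witness; making this step fully rigorous is the main obstacle I anticipate.
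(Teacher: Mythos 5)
Your argument for $\con_{H'}(d)\le\con_G(d)$ is essentially the paper's proof. The paper likewise superposes a path-flow decomposition $f^i$ for each $H_i$, scaled by $\alpha_i$, notes that $\sum_i\alpha_i f^i$ still routes $d$ because $\sum_i\alpha_i=1$, and then bounds the per-edge congestion; the only cosmetic difference is that the paper phrases the edge bound via the mediant inequality $\sum_i a_i/\sum_i b_i \le \max_i a_i/b_i$ (Fact~\ref{2:fact1}) rather than your direct estimate $\sum_i\alpha_i f_i(e)\le\con_G(d)\sum_i\alpha_i c_i(e)=\con_G(d)\,c_{H'}(e)$; the two are interchangeable. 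Be aware, though, that this single inequality is \emph{all} the paper's proof of Lemma~\ref{lemma: ConvexComb} establishes: the lemma is stated without a quality, and the reverse bound $\con_G(d)\le q'\,\con_{H'}(d)$ is never argued inside this lemma --- in the applications the quality is extracted separately, via Lemma~\ref{lemma: Moitra} and an explicit load computation (as in Section~\ref{section: upper-bound}). Your supplementary argument for $V^{*}=K$, namely sublinearity of $d\mapsto\con_G(d)$ combined with Lemma~\ref{lemma: Moitra} to get $\con_G(c_{H'})\le\sum_i\alpha_i q_i\le\max_i q_i$, is correct and matches how the paper computes the quality in practice. The obstacle you flag for $V^{*}\supsetneq K$ (Lemma~\ref{lemma: Moitra} requires $V_H=K$) is genuine, but it lives only in your proposed strengthening and does not affect the part of the argument the paper actually proves and uses.
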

 
\section{Improved Vertex Flow Sparsifiers for Trees} \label{section: upper-bound}
In this section we show that given an unweighted tree $T=(V,E)$, $K \subset V$, we can
construct a flow sparsifier $H$ only on the terminals, i.e., $V(H)=K$, with
quality at most $4$. We then further improve the quality to $2$. The graph $H$ has the nice property of being a convex combination of trees. 

We obtain the quality of $4$ by combining the notion of probabilistic mappings due to Andersen and Feige~\cite{AF09} and a duality argument due to R\"acke~\cite{raecke08}. Our result then immediately follows using as a black-box an implicit result of Gupta~\cite{gupta01}.
We note that a direct application of the Transfer Theorem due to Andersen and
Feige~\cite{AF09} does not apply, since their interchangeability argument relies on
arbitrary capacities and lengths.

Let $w : E \rightarrow \mathbb{R}_{\geq 0}$ be a function which assigns
non-negative values to edges which we refer to as \emph{lengths}. Given a tree
$T=(V,E,w)$ we use $d_{w} : V \times V \rightarrow \mathbb{R}_{\geq 0}$ to
denote the shortest path distance induced by the edge length $w$. A
$0$-\emph{extension} of a tree $T = (V,E)$, $K \subset V$ is a retraction $f :
V \rightarrow K$ with $f(x) = x$, for all $x \in K$, along with another graph
$H = (K, E_{H})$ such that $E_{H} = \{(f(u),f(v)) : (u,v) \in E\}$. The graph
$H$ is referred to as a \emph{connected} $0$-\emph{extension} if in addition we
require that $f^{-1}(x)$ induces a connected component in $T$.

Given a graph $G = (V,E)$, we let $\mathcal{P}$ be a collection of multisets of
$E$, which will be usually referred to as paths. A mapping $M : E \rightarrow
\mathcal{P}$ maps every edge $e$ to a path $P \in \mathcal{P}$. This mapping
can be alternatively represented as a non-negative square matrix $M$ of
dimension $|E| \times |E|$, where $M(e',e)$ is the number of times edge $e$
lies on the path $M(e')$. Let $\mathcal{M}$ denote the collection of mappings
$M$. If we associate to each mapping $M \in \mathcal{M}$ a convex multiplier
$\lambda_M$, the resulting mapping is referred to as a \textit{probabilistic
  mapping}.

\medskip

\noindent \textbf{Connected $0$-extension embedding on Trees. } Suppose we are
given a tree $T = (V,E)$, $K \subset V$ and a connected $0$-extension $(H, f)$,
where $H = (K,E_{H})$ and $f$ is a retraction. Given an edge $(u,v) \in E$ from
$T$, we can use the retraction $f$ to find the edge $(f(u), f(v))$ in $H$ (if
$u$ and $v$ belong to different components). Since this edge is not an edge of
the original tree $T$, we need a way to map it back to $T$ in order to be
consistent with our definition of mappings. The natural thing to do is to take
the unique shortest path between $f(u)$ and $f(v)$ in $T$. Denote by $S^{T}_{u,v}$
all the edges in the shortest path between $u$ and $v$ in $T$. Then, we let
$M_{H,f}((u,v)) = S^{T}_{f(u),f(v)}$ be the mapping $M_{H,f} : E \rightarrow
\mathcal{P}$ induced by $(H,f)$.

Let $\mathcal{H}$ be the family of all connected $0$-extensions for $T$, which
are also trees. We then define the collection of mappings $\mathcal{M}$ for $T$
by $\{M_{H,f} : H \in \mathcal{H}\}$.

\medskip

\noindent
\textbf{Capacity Mappings. }Given a tree $T = (V,E,c)$, $c : E \rightarrow
\mathbb{R}^{+}$ and a connected $0$-extension $(H,f)$, the \textit{load} of an
edge $e \in E$ under $(H,f)$ is $\load_f(e) = \sum_{e'} M_{H,f}({e',e}) \cdot
c(e')$. The \textit{expected load} of an edge $e \in E$ under a probabilistic
mapping is $\sum_{i}\lambda_i \load_{f_{i}}(e)$.

\medskip

\noindent
\textbf{Distance Mappings. }Given a tree $T = (V,E,w)$, $w : E \rightarrow
\mathbb{R}^{+}$ and a connected $0$-extension $(H,f)$, the \emph{mapped length}
of an edge $e' = (u',v') \in E$ under $(H,f)$ is $d_w(f(u'), f(v')) = \sum_{e}
M_{H,f}({e',e}) \cdot w(e)$. The \textit{expected mapped length} of an edge
$e'=(u',v') \in E$ under a probabilistic mapping is $\sum_{i}\lambda_i
d_w(f_i(u'), f_i(v'))$. \medskip

With the above definitions in mind, for some given tree $T=(V,E,c)$, we can
find a flow sparsifiers that is a convex combination of connected
$0$-extensions using the following linear program, and its dual.

\noindent
\begin{minipage}{7.3cm}%
\begin{equation*}\label{primal}%
\begin{array}{@{}ll@{}rl}
 \text{min}  &  \alpha &  &   \\[0.2cm]
 \text{s.t.}  & \forall e \quad &   \sum_{i} \nolimits \lambda_i \cdot \load_{f_i}(e) &\leq \alpha \cdot c(e)   \\[0.3cm]
  & &  \sum_{i} \nolimits \lambda_i & \geq 1  \\[0.3cm]
  & \forall i &  \lambda_i & \geq 0.   
\end{array}
\end{equation*}
\end{minipage}
\hfill
\begin{minipage}{7.3cm}%
\begin{equation*}\label{dual}%
\begin{array}{@{}ll@{}rl}
 \text{min}  & \displaystyle \beta &  &   \\[0.2cm]
 \text{s.t.}  & \forall i \quad &   \sum_{e} \nolimits w(e) \cdot \load_{f_i}(e) &\geq \beta  \quad (*) \\[0.3cm]
  & &  \sum_{e} \nolimits w(e) \cdot c(e)& \leq 1  \\[0.3cm]
  & \forall e &  w(e) & \geq 0.  
\end{array}
\end{equation*}
\end{minipage}
\medskip

\noindent Next, we re-write the dual constraints of type $(*)$ as follows:
\begin{equation*} 
\begin{split}
 \sum_{e} \nolimits w(e)  \load_{f_i} (e)  & =    \sum_{e} \nolimits w(e)  \sum_{e'} \nolimits M_{H,f_i}({e',e}) \cdot c(e')  \\[1ex]
   & =   \sum_{e'} \nolimits c(e') \left( \sum_{e} \nolimits M_{H,f_i}(e',e) \cdot w(e) \right) =  \sum_{e' = (u',v')} \nolimits c(e') \cdot d_{w}(f_i(u'),f_i(v')) \enspace.
\end{split}
\end{equation*}
\noindent
Using this re-formulation and a few observations, the dual is equivalent to:
\begin{equation} \label{newDual}
	\max_{w \geq 0} \min_{i} { \sum_{e=(u,v)} \nolimits c(e) \cdot d_w(f_i(u),f_i(v))}~/~{ \sum_{e} \nolimits w(e) \cdot c(e)} \enspace.
\end{equation}
For the unweighted case $c(e)=1$, we can make use of the following lemma:  
\begin{lemma}[{\cite[Lemma 5.1]{gupta01}}] \label{lemmaGupta}
Given a tree $T=(V,E,w)$, $K \subset V$, we can find a connected $0$-extension $f$ such that $\sum_{e=(u,v)} \nolimits d_w(f(u),f(v)) \leq 4 \cdot  \sum_{e} w_e.$
\end{lemma}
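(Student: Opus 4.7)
The plan is to construct a specific connected $0$-extension $f$ via a greedy rule and then bound the sum $\sum_{e=(u,v)} d_w(f(u), f(v))$ by an amortized charging argument. The rule I would use is to set $f(v)$, for every non-terminal $v$, equal to a terminal $t \in K$ minimizing $d_w(v, t)$, with ties broken by a fixed total order on $K$. A short triangle-inequality check shows this gives a connected $0$-extension: for any vertex $u$ lying on the tree path from $v$ to $f(v)$ and any other terminal $t'$, $d_w(u, t') \geq d_w(v, t') - d_w(v, u) \geq d_w(v, f(v)) - d_w(v, u) = d_w(u, f(v))$, so $f(v)$ is also a closest terminal for $u$ (the tie-breaking rule carries over).

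With the construction fixed, I would exploit the fact that only boundary edges (with $f(u) \neq f(v)$) contribute, and that for such an edge with $u$ taken as the parent of $v$, the unique tree path from $f(u)$ to $f(v)$ must pass through the edge itself, giving
\[
d_w(f(u), f(v)) = d_w(f(u), u) + w(u,v) + d_w(v, f(v)).
\]
Summed over boundary edges, the middle terms already contribute at most $\sum_e w_e$. For the two endpoint-to-terminal pieces I would charge $d_w(v, f(v))$ to the edges on the path from $v$ down to $f(v)$, which lies entirely inside $v$'s component by connectivity of the $0$-extension, and analogously $d_w(u, f(u))$ to the sibling-side path inside $u$'s component.

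The main obstacle is bounding the total charge each edge $e \in E(T)$ receives by $3\,w(e)$, which would yield the factor $4$ overall. The critical combinatorial fact should be that, because $f$ assigns each vertex to its closest terminal, the number of boundary endpoints whose charging path traverses a given edge $e$ is bounded by a small constant: once $e$ lies on the path from a vertex $u$ to its own $f(u)$, the triangle inequality against other terminals severely constrains how $e$ can appear on the path from another boundary endpoint. Making this precise is essentially the heart of Gupta's argument; I would either carry out the charging via induction on the number of non-terminal vertices (peeling a leaf non-terminal at a time and verifying that both the sum and the total weight change in a bounded ratio), or, if the accounting proves too delicate to reproduce cleanly, import the statement as a black box from~\cite{gupta01}, which is how the excerpt cites it.
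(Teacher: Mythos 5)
The paper does not actually prove this lemma: it is imported as a black box from~\cite{gupta01}, which is exactly the fallback you name at the end of your proposal, so on that reading you are doing the same thing the paper does and there is nothing to compare against.

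Your first-choice sketch, though, has a concrete gap at the very step you flag as ``the heart.'' Your accounting charges $d_w(u,f(u))$ once \emph{for each boundary edge incident to $u$}, and every one of those copies is routed along the same path $P_u$ from $u$ to $f(u)$ inside $u$'s cell. A single vertex can be an endpoint of arbitrarily many boundary edges, so the edges on $P_u$ receive a charge proportional to $\deg(u)\cdot w(e)$ rather than $O(1)\cdot w(e)$; the hoped-for ``at most $3\,w(e)$ per edge'' does not follow from the triangle inequality alone. The nearest-terminal rule does impose a compensating constraint---from your own inequality chain, each boundary neighbour $v$ of $u$ satisfies $d_w(v,f(v))\ge d_w(u,f(u))-w(u,v)$, so a wide boundary vertex forces a comparable amount of weight into each adjacent cell---but exploiting that is an amortization over whole cells (or a recursive decomposition), not a per-edge cap on the paths $P_u$. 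The construction itself (nearest-terminal Voronoi cells with a fixed tie-breaking order) and your argument that it produces a \emph{connected} $0$-extension are correct; it is the charging that, as stated, does not close, and the cited Lemma~5.1 of~\cite{gupta01} rests on a different, less local argument than the one you sketch.
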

The above lemma tells us that optimal value of (\ref{newDual}) is bounded by
$4$. This implies that the optimal value of the dual is bounded by $4$, and by
strong duality, the optimal value of the primal is also bounded by $4$. The
latter implies that $T$ admits a $4$-quality vertex sparsifier of size $k$.

\mathversion{bold}
\subsection{Obtaining Quality $2$}
\mathversion{normal}
Next we show how to bring down the quality of flow sparsifiers on
trees to $2$. We give a direct algorithm that constructs a flow sparsifiers and
unlike in the previous subsection, it does not rely on the interchangeability
between distances and capacities. We first consider trees where terminals are
the only leaf nodes, i.e., $L(T)=K$. Later we show how to extend the
result to arbitrary trees.

To convey some intuition, we start by presenting the deterministic version of
our algorithm. We maintain at any point of time a partial mapping $f$--setting
$f(v) = \perp$, when $f(v)$ is still undefined, but producing a valid connected
$0$-extension when the algorithm terminates. Note that $f(x)=x$, for all $x \in
K$. Without loss of generality, we may assume that the tree is rooted at some
non-terminal vertex and the child-parent relationships are defined. The
algorithm works as follows: it repeatedly picks a non-terminal $v$ farthest
from the root and maps it to one of its children $c$, i.e.,
$f(v)=f(c)$\footnote{Alternatively, one can view this step as contracting an
  arbitrary child-edge of $v$.} (we refer to such procedure as Algorithm 1). This process results in a flow sparsifier that is a connected
$0$-extension.

Unfortunately, the quality of the sparsifier produced by the above algorithm
can be very poor. To see this, consider an unweighted star graph $S_{1,k}$,
where leaves are the terminal vertices and the center is the non-terminal
vertex $v$. Any connected $0$-extension of $S_{1,k}$ is a new star graph
$S_{1,k-1}$ lying on the terminals, where the center is the terminal $x$ with
$f(v) = x$. Now, consider a demand function $d$ that sends a unit flow among all
edges in $S_{1,k-1}$. Clearly, $d$ can be feasibly routed in $S_{1,k-1}$. But
routing $d$ in $S_{1,k}$ gives a load of at least $k-1$ along the edge $(x,v)$,
and thus the quality of $S_{1,k-1}$ is at least $k-1$ (Lemma \ref{lemma:
  Moitra}).

One way to improve upon the quality is to map the non-terminal $v$ uniformly at
random to one of the terminals. We can equivalently view this as taking convex
combination over all possible connected $0$-extensions of
$S_{1,k}$. 
By Lemma \ref{lemma: ConvexComb} we know
that such a convex combination gives us another flow sparsifier for $S_{1,k}$,
and it can be checked that the quality of such a sparsifier improves to $2$.
Surprisingly, we show that applying this trivial random-mapping of
non-terminals in trees with terminals as leaves leads to a flow sparsifier $H$ which is a random connected $0$-extension and achieves similar guarantees. This procedure is summarized in Algorithm \ref{algo: contraction2}.

\addtocounter{algorithm}{1}
\algnewcommand\algorithmicinput{\textbf{Input:   }}
\algnewcommand\Input{\item[\algorithmicinput]}
\begin{algorithm}
\caption{\textsc{Randomized Connected $0$-extension}}
\begin{algorithmic}[1]
\Input Tree $T=(V,E)$, $K$, $L(T) = K$.
\State Set $f(x)=x$ for all $x \in K$, $f(v)= \perp$ for all $v \in V \setminus K$.
\While{there exists a $v$ such that $f(v)=\perp$}
\State Choose a non-terminal $v$ farthest from the root and let $C(v)$ be its children.
\State Set $f(v) = f(c)$, where $c \in C(v)$ is chosen uniformly at random. \EndWhile
\end{algorithmic}
\label{algo: contraction2}
\end{algorithm}
\begin{claim} \label{claim: connected0extRand}For a tree $T=(V,E)$, $K \subset
  V$, $L(T) = K$, Algorithm \ref{algo: contraction2} produces a flow sparsifier
  of $T$ that is a random connected $0$-extension $(H,f)$ with $H=(K,E_H)$.
  Moreover, $H = \sum_{i} \lambda_i \cdot H_i$, $\sum_{i} \lambda_i = 1$, where
  the sum is over connected $0$-extensions $(H_i,f_i)$ produced by Algorithm 1.
\end{claim}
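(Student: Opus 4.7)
The plan is to unwind the randomness of Algorithm~\ref{algo: contraction2} and then appeal to Lemma~\ref{lemma: ConvexComb}. Since $T$ is finite, the algorithm has only finitely many distinct execution traces, one per sequence of random choices made at the non-terminals. Enumerate these traces as $i = 1, 2, \dots$, let $(H_i, f_i)$ be the $0$-extension produced by trace $i$, and let $\lambda_i$ be the probability that Algorithm~\ref{algo: contraction2} realizes trace $i$ (a product of factors $1/|C(v)|$ ranging over the non-terminals $v$ processed along the trace). By construction $\sum_i \lambda_i = 1$, and the random graph $H$ output by the algorithm equals $\sum_i \lambda_i H_i$ in distribution. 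Moreover, each trace is a legitimate execution of the deterministic Algorithm~1, which is allowed to pick \emph{any} child of $v$; this already establishes the ``Moreover'' part of the claim.

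Next I would verify that every deterministic outcome $(H_i, f_i)$ is indeed a \emph{connected} $0$-extension. That $f_i$ is a retraction onto $K$ is immediate, since terminals map to themselves at initialization and every non-terminal is eventually assigned an element of $K$. For connectedness of each pre-image $f_i^{-1}(x)$ I would argue by induction on the steps of the algorithm: when the farthest-from-root non-terminal $v$ is processed, every child $c \in C(v)$ already has $f_i(c) \in K$ defined (either $c$ is a leaf, hence by the assumption $L(T) = K$ a terminal that maps to itself, or $c$ is a strictly deeper non-terminal that was handled earlier), and by the inductive hypothesis $f_i^{-1}(f_i(c))$ is a connected subtree of $T$ containing $c$. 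Setting $f_i(v) = f_i(c)$ attaches the tree-neighbor $v$ of $c$ to this subtree, so the invariant persists.

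Finally, I would argue that each $H_i$ is on its own a flow sparsifier of $T$ (of some possibly large but finite quality), so that Lemma~\ref{lemma: ConvexComb} applies to $H = \sum_i \lambda_i H_i$. The lower bound $\con_{H_i}(d) \le \con_T(d)$ follows by projecting any optimal routing of $d$ in $T$ through $f_i$: each tree edge $(u,v)$ with $f_i(u) \ne f_i(v)$ forwards its flow to the edge $(f_i(u), f_i(v)) \in E_{H_i}$, whose $H_i$-capacity is the sum of the $T$-capacities of all tree edges that map to it, while tree edges with $f_i(u) = f_i(v)$ stay within a component and contribute nothing; thus edge congestions do not grow. The upper bound is finite by Lemma~\ref{lemma: Moitra} applied to the demand $d_{H_i}$, since $T$ is connected and every such demand is routable in it. Invoking Lemma~\ref{lemma: ConvexComb} then certifies that $H$ is a flow sparsifier of $T$.

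The main obstacle in the above plan is the connectedness verification in the second paragraph: it is precisely here that the ``farthest from root'' rule in Algorithm~\ref{algo: contraction2} is essential, since without this ordering a child $c$ of $v$ could still have $f_i(c) = \perp$ at the moment $v$ is processed and the inductive step would collapse. Everything else reduces to bookkeeping about the projection of flows and to invoking the two cited lemmas.
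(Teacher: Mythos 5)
Your proof is correct, and it is in fact more than the paper itself provides: Claim~\ref{claim: connected0extRand} is stated in the paper without a written proof. The argument you give---enumerating the finitely many execution traces of Algorithm~\ref{algo: contraction2}, noting each trace is one legal run of the deterministic Algorithm~1 (which gives the ``Moreover'' part), proving connectedness of each $f_i^{-1}(x)$ by induction using the farthest-from-root processing order, observing that every deterministic $(H_i,f_i)$ is by itself a finite-quality flow sparsifier via flow projection through $f_i$, and then invoking Lemma~\ref{lemma: ConvexComb}---is the sound and natural way to close this gap. One small wording point: Lemma~\ref{lemma: Moitra} is phrased under the hypothesis that $H_i$ is \emph{already} a flow sparsifier, so it reads a little cleaner to state that the one-sided bound $\con_{H_i}(d)\le\con_T(d)$ you establish is already what places $H_i$ in the class of flow sparsifiers (of some quality), and connectivity of $T$ together with Lemma~\ref{lemma: Moitra} then gives $\con_T(d_{H_i})<\infty$ as that quality.
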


To compute the quality of $H$ as a flow sparsifier for $T$, we need to bound
the congestion of every edge of $T$ incurred by the embedding of $H$ into $T$.
This embedding routes the capacity of every terminal edge $(x,x')$ in $H$ along
the (unique) shortest paths between leaves $x$ and $x'$ in $T$. First, we
crucially observe that without loss of generality, it suffices to bound the
load of the edges incident to the terminals, i.e., edges incident to leaf
vertices. To see this, let $(u,v)$ be an edge among non-terminals in $T$, with
$v$ being the parent of $u$. Now, when embedding $H$ into $T$, we know that the
demands among all terminal pairs that lie in the subtree $T(u)$ rooted at $u$
\emph{cannot} incur any load on the edge $(u,v)$, as these terminal shortest
paths do not use this edge. Thus, we can safely replace the subtree $T(u)$ with
some dummy terminal and perform the analysis as before.

First, we study edge loads under deterministic connected $0$-extensions. Let $e
= (x,v)$ be the edge incident to $x \in K$, $m_x$ denote the level of $x$ in
$T$ and $\{x,v_{m_x-1},\ldots,v_0\}$ be the set of vertices belonging to the
shortest path between $x$ and the root $r = v_0$ in $T$. Given a connected
$0$-extension $f_i$ output by Algorithm~1 
, we say that
$x$ is \emph{expanded} up to the $\ell$-th level if $f_i(v_j) = x$, for all $j
\in \{m_x,\ldots, \ell\}$. This leads to the following lemma.
\begin{lemma} \label{lemma: load} Let $e = (x,v)$ be the edge incident to $x
  \in K$, $(H_i,f_i)$ be a connected $0$-extension and recall that empty sum is
  defined as $0$. If $x$ is expanded up to the $\ell$-th level, then the load
  of $e$ under $(H_i,f_i)$ is
\[\load_{f_i}(e) \leq
    1 + \sum_{j=\ell}^{m_x-1}(c_j-1) ,~ \ell \in \{m_x,\ldots, 0\},\]
where $c_j$ denotes the number of children of  non-terminal $v_j$ in $T$. 
\end{lemma}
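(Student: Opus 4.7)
The plan is to identify $\load_{f_i}(e)$ with the size of the boundary of $S := f_i^{-1}(x)$ in $T$, determine $S$ explicitly, and then count its boundary edges.

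First I would unpack the load. Since $T$ is unweighted, $\load_{f_i}(e) = \sum_{e'=(u',v') \in E} M_{H_i,f_i}(e',e)$; because $T$ is a tree, $M_{H_i,f_i}(e',e) \in \{0,1\}$ and equals $1$ iff the unique shortest $T$-path between $f_i(u')$ and $f_i(v')$ traverses $e = (x, v_{m_x-1})$. Since $x$ is a leaf and $e$ is its only incident edge, this path traverses $e$ iff exactly one of $f_i(u'),f_i(v')$ equals $x$. Thus $\load_{f_i}(e) = |\delta(S)|$, the number of tree edges with exactly one endpoint in $S$.

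Next I would characterize $S$. The defining step of Algorithm~1 assigns each non-terminal $w$ the value $f_i(w) = f_i(c)$ for some child $c$, so a straightforward induction over the processing order (with leaves/terminals as the base case) shows $f_i(w)$ is always a terminal lying in the subtree of $T$ rooted at $w$. Hence $f_i(w) = x$ forces $w$ to be an ancestor of $x$, yielding $S \subseteq \{v_0, v_1, \ldots, v_{m_x}\}$. Combined with the expansion hypothesis $\{v_\ell, v_{\ell+1}, \ldots, v_{m_x}\} \subseteq S$ and the connectedness requirement on $f_i^{-1}(x)$, taking $\ell$ to be the deepest level up to which $x$ is expanded gives $S = \{v_\ell, v_{\ell+1}, \ldots, v_{m_x}\}$ exactly.

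Finally I would count $|\delta(S)|$. For each $j \in \{\ell, \ldots, m_x - 1\}$, the non-terminal $v_j$ has $c_j$ children, one of which ($v_{j+1}$) is in $S$ while the other $c_j - 1$ root subtrees disjoint from $x$ (hence disjoint from $S$), contributing $c_j - 1$ boundary edges. In addition, if $\ell \geq 1$ then the edge from $v_\ell$ to its parent $v_{\ell-1}$ is a boundary edge, contributing $1$; if $\ell = 0$ no edge leaves $v_0$ upward. The leaf $v_{m_x} = x$ has no children. Summing yields
\[
\load_{f_i}(e) = |\delta(S)| = \sum_{j=\ell}^{m_x-1}(c_j - 1) + \mathbf{1}[\ell \geq 1] \leq 1 + \sum_{j=\ell}^{m_x-1}(c_j-1),
\]
with the convention that the empty sum (i.e. $\ell = m_x$) is $0$. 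The main delicacy is establishing $S \subseteq \{v_0, \ldots, v_{m_x}\}$: this relies crucially on the algorithmic structure of $f_i$, since a general connected $0$-extension could extend $f^{-1}(x)$ sideways into non-terminal subtrees, but such sideways extensions are forbidden for $0$-extensions produced by Algorithm~1.
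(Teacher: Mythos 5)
The paper states Lemma~\ref{lemma: load} without a proof (it does not appear in the ``Missing Proofs'' appendix either), so there is no paper argument to compare against; your proof stands on its own, and it is correct. The reduction of $\load_{f_i}(e)$ to $|\delta(f_i^{-1}(x))|$ is right: since $T$ is unweighted and a tree, $M_{H_i,f_i}(e',e)\in\{0,1\}$, and because $e$ is the unique edge at the leaf $x$, it lies on $S^T_{f_i(u'),f_i(v')}$ exactly when one endpoint of $e'$ maps to $x$ and the other does not. Your identification of the key structural fact --- that Algorithm~1 always assigns $f_i(w)$ to a terminal in the subtree rooted at $w$, hence $f_i^{-1}(x)\subseteq\{v_0,\dots,v_{m_x}\}$, and so by connectedness $f_i^{-1}(x)=\{v_\ell,\dots,v_{m_x}\}$ --- is exactly what is needed and is a point the paper leaves implicit; as you note, this fails for arbitrary connected $0$-extensions, so the lemma genuinely depends on the algorithmic origin of $f_i$. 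The final edge count $|\delta(S)|=\sum_{j=\ell}^{m_x-1}(c_j-1)+\mathbf{1}[\ell\ge 1]\le 1+\sum_{j=\ell}^{m_x-1}(c_j-1)$ is correct, including the boundary cases $\ell=m_x$ (load $1$, empty sum) and $\ell=0$ (no parent edge at the root, where the stated bound overcounts by $1$ --- harmless since the lemma is an inequality and the downstream computation only uses it as an upper bound). One small point worth making explicit: the argument needs ``expanded up to the $\ell$-th level'' to mean $\ell$ is the \emph{deepest} such level ($f_i(v_{\ell-1})\ne x$ when $\ell\ge 1$); you do address this, and it is the reading the paper implicitly uses when computing $\mathbb{P}[Z^x_\ell]$.
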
   
Let $I^{x}_{\ell} = \{(H_i,f_i)\}$ be the set of connected $0$-extensions
output by Algorithm~1 
 where $x$ is expanded up to the
$\ell$-th level. We observe that the edge $e$ has the same load regardless of
which element of $I^{x}_{\ell}$ we choose. Thus, for any $(H_i,f_i) \in
I^{x}_{\ell}$, we can write $\load_\ell(e) =
\load_{f_i}(e)$. 

Now, we study the expected edge loads under the random connected $0$-extension
output by Algorithm~2. 
 Let $N$ be the number of all
different connected $0$-extensions that can be output by Algorithm~1. 
If by $Z^{x}_\ell$ we denote the event that $x$ is expanded up
to the $\ell$-th level, then 
it follows that the expected load $\mathbb{E}[\load_f(e)]$ of $e=(x,v)$ under
$(H,f)$ is
\begin{equation} \label{eq: exload}
	 \sum_{i=1}^{N} \load_{f_i}(e)/N  =  \sum_{\ell=0}^{m_x} \frac{\text{\# of }f_i\text{'s s.t. } Z^{x}_{\ell} }{N} \cdot \load_{\ell}(e) =  \sum_{\ell=0}^{m_x} \mathbb{P} [Z^{x}_{\ell}] \cdot \load_{\ell}(e)\enspace.
\end{equation}
Since in Algorithm~2 
 all non-terminals are mapped
independently of each other, we obtain $\mathbb{P}[Z^{x}_{\ell}] =
(1-1/c_{\ell-1}) \prod_{j=\ell}^{m_x-1} {1}/{c_{j}}, \ell \in
\{m_x,\ldots,1\}$ (recall that the empty product is defined as $1$). Further, observe that $\mathbb{P}[Z^{x}_0] = 1 / \prod_{j=0}^{m_x-1} c_j \enspace.$
Plugging the probabilities and Lemma \ref{lemma: load} in (\ref{eq: exload}), we get that 
$\mathbb{E}[\load_f(e)]$ is
\begin{equation*} 
\frac{1}{\prod_{j=0}^{m_x-1} c_j} \bigg(1 + \sum_{j=0}^{m_x-1} (c_j - 1)\bigg)  +  \sum_{\ell = 1}^{m_x} (1-1/c_{\ell-1})  \prod_{j=\ell}^{m_x-1} \frac{1}{c_{j}} \bigg(1 + \sum_{j = \ell}^{m_x-1} (c_j-1)\bigg)\enspace.
\end{equation*}
Next, we rewrite the above as $A / B$, where $B = \prod_{j=0}^{m_x-1}c_j$ and $A$ is given by
\begin{equation*} \label{eqn3.5} 1 + \sum_{j=0}^{m_x-1} (c_j - 1) +
  \sum_{\ell=1}^{m_x-1} (c_{\ell-1} - 1) \prod_{j=0}^{\ell-2} c_j \bigg(1 +
    \sum_{j=\ell}^{m_x-1}(c_j - 1) \bigg) + (c_{m_x-1} - 1)
  \prod_{j=0}^{m_x-2}c_j \enspace.
\end{equation*}
The following lemma simplifies the middle expression of $A$.
\begin{lemma} \label{lemma: Induction} For any positive integers $\{c_0,\ldots, c_{m_x-1}\}$ and  $m_x \geq 3$,
\begin{equation*} 
\sum_{\ell=1}^{m_x-1} (c_{\ell-1} - 1) \prod_{j=0}^{\ell-2}
    c_j \big(1 + \sum_{j=\ell}^{m_x-1}(c_j - 1) \big) = (c_{m_x-1} + 1)
    \prod_{\ell=0}^{m_x-2} c_\ell - \sum_{\ell = 0}^{m_x-1} (c_\ell - 1) - 2 \enspace.
\end{equation*}
\end{lemma}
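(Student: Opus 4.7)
My plan is a direct, induction-free calculation that rests on a single telescoping identity applied twice. For brevity write $P_k := \prod_{j=0}^{k-1}c_j$ (so $P_0 = 1$). The key observation is that
\begin{equation*}
\sum_{\ell=1}^{n}(c_{\ell-1}-1)\prod_{j=0}^{\ell-2}c_j \;=\; \sum_{\ell=1}^{n}\bigl(P_\ell - P_{\ell-1}\bigr) \;=\; P_n - 1,
\end{equation*}
since $(c_{\ell-1}-1)P_{\ell-1} = P_\ell - P_{\ell-1}$. I will use this identity in two different ways.

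The first step is to split the factor $\bigl(1 + \sum_{j=\ell}^{m_x-1}(c_j-1)\bigr)$ in the left-hand side into its two summands. The ``$1$'' piece contributes exactly $\sum_{\ell=1}^{m_x-1}(c_{\ell-1}-1)\prod_{j=0}^{\ell-2}c_j$, which by the telescoping identity above (with $n = m_x-1$) equals $P_{m_x-1} - 1$.

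The second step handles the remaining double sum $\sum_{\ell=1}^{m_x-1}(c_{\ell-1}-1)\prod_{j=0}^{\ell-2}c_j \sum_{j=\ell}^{m_x-1}(c_j-1)$. I would swap the order of summation to rewrite it as $\sum_{j=1}^{m_x-1}(c_j-1)\sum_{\ell=1}^{j}(c_{\ell-1}-1)\prod_{i=0}^{\ell-2}c_i$. The inner sum is again a telescoping sum equal to $P_j - 1$. Thus this contribution equals $\sum_{j=1}^{m_x-1}(c_j-1)(P_j - 1)$, and expanding $(c_j-1)P_j = P_{j+1} - P_j$ lets me telescope once more, giving $P_{m_x} - P_1 - \sum_{j=1}^{m_x-1}(c_j-1)$.

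The final step is bookkeeping: combining the two contributions,
\begin{equation*}
\text{LHS} \;=\; \bigl(P_{m_x-1} - 1\bigr) + \bigl(P_{m_x} - P_1 - \textstyle\sum_{j=1}^{m_x-1}(c_j-1)\bigr) \;=\; P_{m_x-1} + P_{m_x} - \textstyle\sum_{j=0}^{m_x-1}(c_j-1) - 2,
\end{equation*}
after absorbing $-P_1 - 1 = -c_0 - 1 = -(c_0 - 1) - 2$ into the symmetric sum. Since $P_{m_x-1} + P_{m_x} = (1 + c_{m_x-1})P_{m_x-1} = (c_{m_x-1}+1)\prod_{\ell=0}^{m_x-2}c_\ell$, this matches the right-hand side. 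The only step that requires any care is the index-juggling when swapping the order of summation; everything else is pure telescoping.
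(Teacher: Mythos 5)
Your proof is correct, and it takes a genuinely different route from the paper. The paper proceeds by induction on $m_x$ (base case $m_x=3$, then an inductive step that expands $P(m_x)$ in terms of $P(m_x-1)$), invoking the telescoping identity
\begin{equation*}
\sum_{\ell = 1}^{m_x} (c_{\ell-1} -1) \prod_{j=0}^{\ell-2} c_j = \prod_{\ell = 0}^{m_x-1} c_\ell - 1
\end{equation*}
exactly once inside the inductive step (it is Eqn.~(\ref{eqn3.7}) there). You instead give a direct, induction-free computation: split off the ``$1$'' term, swap the order of summation in the remaining double sum, and apply the same telescoping identity twice more (once for the inner sum, once after expanding $(c_j-1)P_j = P_{j+1}-P_j$). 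Both proofs ultimately rest on the same telescoping lemma; yours simply uses it systematically so the whole argument collapses to bookkeeping, which avoids the index-juggling of the inductive step and incidentally establishes the identity for all $m_x \geq 1$ with the empty-sum and empty-product conventions, not just $m_x \geq 3$. I would call your version slightly cleaner and more self-contained, but neither approach buys substantially more than the other.
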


\begin{proof}
  Let $P(m_x-1)$ be the left-hand side expression in the statement of the
  lemma. We proceed by induction on $m_x$. For the base case $m_x=3$, it is
  easy to argue that the claim is valid. If we assume that the lemma holds true
  for $m_x-1$, then we get that:
\begin{equation} \label{eqn 3.6}
\begin{aligned}
	P(m_x) = &  \sum_{\ell=1}^{m_x-1} (c_{\ell-1} - 1) \prod_{j=0}^{\ell-2} c_j \bigg(1 + \sum_{j=\ell}^{m_x-1}(c_j - 1) + (c_{m_x} - 1) \bigg) \\
	 & + (c_{m_x-1} -1) \prod_{j=0}^{m_x-2}c_j\bigg((c_{m_x}-1) + 1\bigg) \\
	 = & \sum_{\ell=1}^{m_x-1} (c_{\ell-1} - 1) \prod_{j=0}^{\ell-2} c_j \bigg(1 + \sum_{j=\ell}^{m_x-1}(c_j - 1) \bigg)  \\
	 & + (c_{m_x} - 1) \sum_{\ell=1}^{m_x} (c_{\ell-1} -1) \prod_{j=0}^{\ell-2}c_j \; + \;   (c_{m_x-1} -1) \prod_{j=0}^{m_x-2}c_j\enspace.
\end{aligned} 
\end{equation}
Note that the following expression is a simple telescoping series:
\begin{equation} \label{eqn3.7}
	\sum_{\ell = 1}^{m_x} (c_{\ell-1} -1) \prod_{j=0}^{\ell-2} c_j = \prod_{\ell = 0}^{m_x-1} c_\ell - 1\enspace.
\end{equation}
Plugging this into Eqn. \eqref{eqn 3.6} and using induction hypothesis gives:
\begin{equation*}
\begin{split}
 P(m_x) 
&=  (c_{m_x-1} + 1) \prod_{\ell=0}^{m_x-2} c_\ell - \sum_{\ell = 0}^{m_x-1} (c_\ell - 1) - 2 
\quad+  (c_{m_x} -1) \bigg( \prod_{\ell=0}^{m_x-1} c_\ell -1\bigg) \\
&\quad+ (c_{m_x-1} -1) \prod_{j=0}^{m_x-2} c_j  
= (c_{m_x} + 1) \prod_{\ell=0}^{{m_x}-1} c_\ell - \sum_{\ell = 0}^{m_x} (c_\ell - 1) - 2\enspace.
\end{split}
\end{equation*}
This completes the induction step, and hence the proof of the lemma.
\end{proof}

Now, plugging the above lemma in $A$ we get that $A = 2B-1$. Thus,
$\mathbb{E}[\load_f(e)] = (2B-1)/B \leq 2$. Since we consider only unweighted
trees, it follows that the expected congestion of every edge is also bounded by $2$.
Taking the maximum over all edge congestions yields the following:
\begin{lemma} \label{lemma: TreeTerminalLeaves} Given a tree $T=(V,E)$, $K \subset V$, $L(T)=K$, there is a $2$-quality flow sparsifier $H$, which is a convex combination over connected $0$-extensions of $T$.
\end{lemma}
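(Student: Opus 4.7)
The plan is to package the development already carried out in this section into a clean proof, by applying Algorithm~2 to $T$, identifying the resulting random connected $0$-extension as a convex combination of connected $0$-extensions (via Claim~\ref{claim: connected0extRand}), and then invoking Lemma~\ref{lemma: Moitra} to reduce the quality bound to a congestion bound in $T$. Specifically, if we let $d_H(x,x'):=c_H(x,x')$, then the quality equals $\con_T(d_H)$, and we use the natural routing that sends $c_H(x,x')$ units along the unique shortest $x$-to-$x'$ path in $T$. With this routing the load incurred on any edge $e\in E(T)$ is exactly the quantity $\mathbb{E}[\load_f(e)]$ analyzed in the preceding computation, so it suffices to prove that this expected load is at most~$2$ for every $e$.

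Next, I would record the reduction to terminal-incident edges. For any internal edge $(u,v)$ with $v$ the parent of $u$, terminal pairs lying entirely inside the subtree $T(u)$ contribute no load to $(u,v)$ in any connected $0$-extension, so I can safely replace $T(u)$ by a single dummy terminal without decreasing the load on $(u,v)$. This reduces the analysis to edges of the form $e=(x,v)$ with $x\in K$ a leaf. For such an edge, Lemma~\ref{lemma: load} provides the deterministic load bound $\load_{\ell}(e)\le 1+\sum_{j=\ell}^{m_x-1}(c_j-1)$ whenever $x$ is expanded up to level~$\ell$, and the formula~(\ref{eq: exload}), combined with the independence of the random choices at the non-terminals $v_0,\ldots,v_{m_x-1}$, yields the closed-form expression $\mathbb{E}[\load_f(e)]=A/B$ with $B=\prod_{j=0}^{m_x-1}c_j$ and $A$ the three-term sum displayed just before Lemma~\ref{lemma: Induction}.

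Finally, I apply Lemma~\ref{lemma: Induction} to collapse the middle sum of $A$, which after cancellation leaves $A=2B-1$ and hence $\mathbb{E}[\load_f(e)]=(2B-1)/B<2$. Because $T$ is unweighted, $c(e)=1$ so the expected load equals the expected congestion, and taking the maximum over all terminal-incident edges bounds $\con_T(d_H)\le 2$. By Lemma~\ref{lemma: Moitra} this shows that $H$ is a $2$-quality flow sparsifier; by Claim~\ref{claim: connected0extRand} it is a convex combination over connected $0$-extensions of $T$, which is exactly the conclusion.

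The substantive obstacle is the telescoping identity in Lemma~\ref{lemma: Induction}; once it is in hand the rest is bookkeeping and the reduction step. A secondary but important point to state carefully is the reduction to terminal-incident edges: one must verify that after substituting dummy terminals the expected load on the edge $(u,v)$ can only increase (or stay the same), so that the bound of $2$ transfers back to the original tree. With that in place the argument is complete.
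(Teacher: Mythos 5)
Your proposal reproduces the paper's own argument step for step: Algorithm~2 plus Claim~\ref{claim: connected0extRand} to get the convex combination, Lemma~\ref{lemma: Moitra} to reduce quality to congestion, the reduction to terminal-incident edges, the per-level load bound from Lemma~\ref{lemma: load}, the probability computation in Eqn.~(\ref{eq: exload}), and Lemma~\ref{lemma: Induction} to collapse $A$ to $2B-1$. This is the same proof, correctly assembled, with a welcome explicit flag on the dummy-terminal reduction that the paper states more tersely.
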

\noindent
\textbf{Derandomization.} Next we show that Algorithm 2 
 can be easily derandomized. We obtain a deterministic
algorithm that runs $O(n+k^{2}\alpha(2k))$ time and gives the same
guarantees as in Lemma \ref{lemma: TreeTerminalLeaves}, where $\alpha(\cdot)$ is
the inverse Ackermann function.

We first give an $O(n)$ time preprocessing step. For a tree $T=(V,E)$, $K \subset V$,
$L(T)=K$, we repeatedly contract edges incident to non-terminals of degree $2$
in $T$. When all such non-terminals are deleted from $T$,
our new tree can have at most $2k$ vertices. Note that this tree exactly preserves all flows among terminals. 

Now, we crucially observe that in the flow sparsifier $H$ output by Algorithm~2,
the capacity between any two terminals $x$ and $x'$
is exactly the probability that $x$ and $x'$ are connected under the random
mapping $f$. 
We next show that this probability can be computed efficiently.

Let $(x,x')$ be any terminal pair, $\lca(x,x')$ denote their lowest common
ancestor in $T$ and $r$ denote the level of $\lca(x,x')$ in $T$. Moreover, let
$V^{x}_{r} = \{x,v_{m_x-1},\ldots,v_{r}\}$, $v_{r} = \lca(x,x')$, be the set of
vertices belonging to the shortest path between $x$ and the $\lca(x,x')$.
Similarly, define $V^{x'}_{r}=\{x',v'_{m_{x'}-1},\ldots,v_{r}\}$. Since in
Algorithm~2 
 all non-terminals are mapped independently
of each other, we obtain
\begin{equation} \label{eq: derandomized}
\begin{split}
	\mathbb{P}[(f(x),f(x')) \in E_H]  & =   2 \cdot \mathbb{P}[f(v_r) = x] \cdot \mathbb{P}[f(v)=x,~\forall v \in V^{x}_{r-1}]  \cdot \mathbb{P}[f(v')=x',~\forall v' \in V^{x'}_{r-1}] \\[1ex]
	 & = \frac{2}{c_r} \cdot \prod_{j=r}^{m_{x}-1} \frac{1}{c_j}  \prod_{j=r}^{m_{x'}-1} \frac{1}{c'_j} \enspace.
\end{split}
\end{equation}
where $c_j$, $c'_j$ are the number of children of the non-terminal $v_j$, $v'_j$, respectively. 

The above expression suggest that one should build an efficient data-structure
for $T$ that answers queries of the form ``What is the product of the elements
associated with vertices along the path from $x$ to $x'$ in $T$?''. This
problem is known as \emph{The Tree Product Query} problem. For an arbitrary
tree with $n$ vertices, Alon and Schieber \cite{AlonS87} show that in order to answer each
Tree Product query in at most $O(\alpha(n))$ steps, an $O(n)$ preprocessing
time is sufficient.

Now we are ready to give our deterministic procedure. We first apply our
initial preprocessing step in $O(n)$ time. Since the resulting tree has at most
$2k$ vertices, it takes $O(k)$ time to preprocess the tree such that every
internal vertex knows the number of its children. Next, using $O(k)$
preprocessing, we build a data-structure for the Tree Product Query problem.
Now, for every terminal pair $(x,x')$ we can compute in $O(\alpha(2k))$ time
the capacity of $(x,x')$ in $H$ from the Tree product query between $x$ and
$x'$ and Eqn. (\ref{eq: derandomized}). Since there are at most $O(k^{2})$
terminal pairs, we get a running time of $O(n + k^{2}\alpha(2k))$. The
correctness is immediate from the above observations.

\medskip

\noindent
\textbf{Extension to Arbitrary Trees.} We show that one can reduce vertex sparsificiation for arbitrary trees to trees having terminals as leaf nodes. First, observe that without loss of generality, $L(T) \subseteq K$. Indeed, if there is a non-terminal leaf vertex $u$, we can simply remove it as $u$ cannot belong to any shortest path between two terminals. Note that the resulting tree exactly preserves all multicommodity flows among terminals. We repeatedly remove such vertices until $L(T) \subseteq K$.

Now assume that $u \in K \setminus L(T)$, i.e., $u$ is an internal terminal vertex, let $\delta(u)$ denote its degree and let $v_1,\ldots,v_{\delta(u)}$ be its neighbours. We make $\delta(u)$ copies $u_1,\ldots, u_{\delta(u)}$ of vertex $u$ and replace a neighbour edge $(u,v_i)$ by $(u_i,v_i)$. This splits the tree $T$ into $\delta(u)$ trees $T_i$, $i=1,\ldots, \delta(u)$, each having a copy of $u$. We let $K_i = V(T_i) \cap K$ be the new terminal set for $T_i$. We recursively apply this procedure to each $T_i$ until the only leaf nodes in the resulting trees are terminals. We then invoke Lemma \ref{lemma: TreeTerminalLeaves} to each such tree and finally combine these sparsifiers by merging the copies of the terminal at which they previously split. An inductive argument along with Lemma \ref{lemma: Merging} shows that the quality of the final sparsifier can be again bounded by $2$. This leads to the following theorem:

\begin{theorem} Given an unweighted tree $T=(V,E)$, $K \subset V$, there exists a $2$-quality flow sparsifier $H$. Moreover, $H$ can be viewed as a convex combination over connected $0$-extensions of $T$.
\end{theorem}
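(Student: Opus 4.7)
The plan is to reduce to Lemma~\ref{lemma: TreeTerminalLeaves} and assemble pieces via Lemma~\ref{lemma: Merging}. First, I would preprocess $T$ by iteratively deleting non-terminal leaves; such a leaf lies on no shortest path between terminals, so deleting it preserves all terminal multicommodity flows exactly and leaves us with a tree satisfying $L(T) \subseteq K$. It therefore suffices to prove the theorem under this assumption.

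Next, I would induct on the number $s := |K \setminus L(T)|$ of internal terminals. The base case $s=0$ is exactly Lemma~\ref{lemma: TreeTerminalLeaves}. For the inductive step, pick any internal terminal $u$ with neighbours $v_1, \ldots, v_{\delta(u)}$, introduce copies $u_1, \ldots, u_{\delta(u)}$, and replace each edge $(u, v_i)$ by $(u_i, v_i)$. This yields subtrees $T_1, \ldots, T_{\delta(u)}$ on disjoint vertex sets, each with terminal set $K_i := (V(T_i) \cap K) \cup \{u_i\}$. Since $u_i$ is a leaf of $T_i$, each $T_i$ has strictly fewer internal terminals than $T$, so by induction each admits a $2$-quality flow sparsifier $H_i$ that is a convex combination of connected $0$-extensions. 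Iteratively merging $H_1, \ldots, H_{\delta(u)}$ along the correspondences identifying the copies of $u$ and applying Lemma~\ref{lemma: Merging} a total of $\delta(u)-1$ times gives a $2$-quality flow sparsifier $H$ for $T$.

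The main point to verify is that the merged object is still a convex combination of connected $0$-extensions of $T$. Writing $H_i = \sum_j \lambda^{(i)}_j H_i^{(j)}$ with each $H_i^{(j)}$ a connected $0$-extension of $T_i$, and using that the merge operation is linear in edge capacities, distributing products yields
\[ H = \sum_{(j_1, \ldots, j_{\delta(u)})} \Big( \prod_i \lambda^{(i)}_{j_i} \Big) \cdot \big( H_1^{(j_1)} \oplus_{\phi_2} \cdots \oplus_{\phi_{\delta(u)}} H_{\delta(u)}^{(j_{\delta(u)})} \big). \]
Each summand is a connected $0$-extension of $T$: the individual retractions $f_i^{(j_i)}$ glue along the identifications of the $u_i$'s into a single retraction $f : V \to K$, and connectedness of its fibers follows because joining connected subsets that share the common vertex $u$ again yields a connected set. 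The coefficients sum to $1$, so this is the desired convex combination.

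The hard part, I expect, is precisely this convex-combination bookkeeping together with verifying connectedness of fibers after gluing; the quality bound itself drops out immediately from iterated application of Lemma~\ref{lemma: Merging}, since every stage merges sparsifiers of quality at most $2$.
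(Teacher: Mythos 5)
Your proposal is correct and follows essentially the same route as the paper: preprocess to ensure $L(T)\subseteq K$, split the tree at internal terminals, invoke Lemma~\ref{lemma: TreeTerminalLeaves} on the resulting pieces, and reassemble via Lemma~\ref{lemma: Merging}. The paper presents the recursion informally and leaves the ``convex combination of connected $0$-extensions'' claim implicit; you make the induction on $|K\setminus L(T)|$ explicit and additionally spell out the multilinearity of $\oplus_\phi$ and the gluing of retractions with connected fibers --- a helpful addition of detail rather than a different approach.
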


\section{Lower Bound} \label{section: lower-bound}
In this section we present a $2 - o(1)$ lower bound on the quality of any cut sparsifier for a star graph. Since previous lower bounds relied on non-planar graph instances, this is the first non-trivial lower bound for arbitrary cut sparsifiers on planar graphs. The result extends to the stronger notion of flow sparsifiers.

The main idea behind our approach is to exploit the symmetries of the star graph. We observe that these symmetries induce other symmetries on the cut structure of the graph. This simplifies the structure of an optimal cut-sparsifier.

Let $G = (K \cup \{v\}, E)$, be an unweighted star with $k$ terminals. Let $\pi'$ be any permutation of $K$. We extend $\pi'$ to a permutation $\pi$ of $K \cup \{v\}$ by setting $\pi(x) = \pi'(x), \forall x \in K$ and $\pi(v) = v$. Now, for any $U \subset K \cup \{v\}$ and any such a permutation $\pi$, we use the symmetry $\capacity_G(\delta(U)) = \capacity_G(\delta(\pi(U)))$. The latter implies that for any $S \subset K$, $\mincut_G(S, K \setminus S) = \mincut_G(\pi(S), K \setminus \pi(S))$. 

For a cut sparsifier $H$ of quality $q$ for $G$, we show that $\pi(H)$, i.e., the graph obtained by renaming all vertices of $H$ according to permutation $\pi$, is also a cut sparsifier of quality $q$ for $G$. Indeed, for any $S \in K$, $\capacity_{\pi(H)}(\delta(S)) = \capacity_{H}(\delta(\pi^{-1}(S))) \geq \mincut_G(\pi^{-1}(S), K \setminus \pi^{-1}(S) ) = \mincut_G(S, K \setminus S)$. Symmetrically, one can show that $\capacity_{\pi(H)}(\delta(S)) \leq q \cdot \mincut_G(S, K \setminus S)$.

\begin{lemma} \label{lemma: ConvexCombCut} A convex combination of any two cut sparsifiers with the same quality gives a new cut sparsifier with the same or better quality. 
\end{lemma}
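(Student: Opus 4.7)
The plan is to exploit the paper's standing convention $V_H = K$ for sparsifiers, under which each terminal bipartition $(S, K \setminus S)$ yields exactly one cut in the sparsifier. Consequently, $\mincut_H(S, K \setminus S)$ coincides with $\capacity_H(\delta(S))$ and is a \emph{linear} (rather than merely concave) function of the edge capacities. This reduction from a min-over-partitions to a single linear functional is what makes both the lower and the upper quality bounds propagate under convex combinations.

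First I would spell out the convex combination at the edge level: given two quality-$q$ cut sparsifiers $H_1$ and $H_2$ on the common vertex set $K$, and $\alpha \in [0,1]$, define $H$ on $K$ by $c_H(x,x') = \alpha\, c_{H_1}(x,x') + (1-\alpha)\, c_{H_2}(x,x')$ for every terminal pair $(x,x')$. Summing over edges crossing $\delta(S)$ gives $\capacity_H(\delta(S)) = \alpha\, \capacity_{H_1}(\delta(S)) + (1-\alpha)\, \capacity_{H_2}(\delta(S))$, immediately from linearity of the cut functional.

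Next I would invoke $V_H = K$ on all three sparsifiers to rewrite the above as the identity $\mincut_H(S, K\setminus S) = \alpha\, \mincut_{H_1}(S, K\setminus S) + (1-\alpha)\, \mincut_{H_2}(S, K\setminus S)$. Since each term on the right-hand side lies, by assumption, in the interval $[\mincut_G(S, K \setminus S),\, q \cdot \mincut_G(S, K\setminus S)]$, so does their convex combination. Taking this to hold for every $S \subset K$ establishes that $H$ is a cut sparsifier of quality at most $q$, and strict improvement is possible when the two sparsifiers achieve the bound $q$ on different terminal sets.

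The one conceptual subtlety, and the place I would be most careful, is recognizing that the argument truly relies on $V_H = K$: with Steiner vertices, $\mincut_H(S, K\setminus S)$ is a minimum over the assignments of those Steiner vertices to the two sides, making it a concave (not linear) function of capacities. Concavity preserves the lower bound $\mincut_H \geq \mincut_G$, but the upper bound $\mincut_H \leq q \cdot \mincut_G$ need no longer follow from the same bound on $H_1$ and $H_2$, because the partitions minimizing $\capacity_{H_1}(\delta(U))$ and $\capacity_{H_2}(\delta(U))$ can differ. Since Section~\ref{section: lower-bound} operates entirely in the $V_H = K$ regime, and in particular applies this lemma to the symmetrized sparsifier obtained by averaging $\pi(H)$ over permutations $\pi$ of the terminals, this restriction is harmless in context.
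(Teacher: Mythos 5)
Your proof is correct, and since the paper states this lemma without an explicit proof (it is introduced with ``we observe''), your write-up in effect supplies the missing argument. The key step is exactly the one you isolate: under the convention $V_H = K$, the $S$-separating cut in the sparsifier is unique, so $\mincut_H(S, K\setminus S)$ degenerates to the single linear functional $\capacity_H(\delta(S))$, and both the lower and upper quality bounds pass through a convex combination verbatim. Your parenthetical warning is also correct and worth keeping in mind: with Steiner vertices, $\mincut_H(S, K\setminus S)$ becomes a minimum of linear functionals (hence concave), so averaging would preserve the domination bound $\mincut_H \geq \mincut_G$ but could destroy the upper bound $\mincut_H \leq q\cdot\mincut_G$ --- e.g.\ when $H_1$ and $H_2$ have their cheap cuts in ``opposite'' places, the averaged graph can have no cheap cut at all. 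Since Section~\ref{section: lower-bound} averages permuted copies $\pi(H')$ of a $V_H=K$ sparsifier (the resulting graph is explicitly a complete graph on $K$), the restriction is harmless there, precisely as you say.
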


\begin{lemma}
For the star graph $G$ defined as above, there exists an optimum cut sparsifier $H$, which is a complete graph with uniform edges-weights.
\end{lemma}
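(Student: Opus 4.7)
The plan is to average an arbitrary optimum cut sparsifier over the full permutation group of $K$ and then use the symmetry of the result. Let $H^*$ be an optimum cut sparsifier of $G$ with $V_{H^*}=K$, and let $q^*$ denote its quality. By the symmetry argument preceding the lemma, for every permutation $\pi$ of $K$ (extended by $\pi(v)=v$) the relabelled graph $\pi(H^*)$ is a cut sparsifier of $G$ with the same quality $q^*$; it is again optimum because $H^*$ was.

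Next I would form the symmetrized graph
\begin{equation*}
H := \frac{1}{k!} \sum_{\pi \in S_k} \pi(H^*)\enspace,
\end{equation*}
which is a convex combination of $k!$ cut sparsifiers of quality $q^*$ all defined on the common vertex set $K$. Applying Lemma~\ref{lemma: ConvexCombCut} iteratively (the set of cut sparsifiers of $G$ of quality at most $q^*$ with vertex set $K$ is closed under pairwise convex combinations, hence under arbitrary finite ones), $H$ is itself a cut sparsifier of quality at most $q^*$, and therefore also optimum.

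Finally I would argue that $H$ has the claimed structure. By construction, $H$ is invariant under the action of $S_k$ on $K$: for any $\sigma \in S_k$, $\sigma(H) = \frac{1}{k!}\sum_\pi (\sigma \pi)(H^*) = H$. Since $S_k$ acts transitively on the set of unordered pairs of distinct terminals, all edge weights of $H$ must be equal to a common value $w\ge 0$. If $w=0$ then $H$ is edgeless and would give $\mincut_H(\{x\},K\setminus\{x\}) = 0 < 1 = \mincut_G(\{x\},K\setminus\{x\})$, contradicting that $H$ is a cut sparsifier; hence $w>0$ and $H$ is the complete graph on $K$ with uniform edge weights.

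The only substantive step is the iterated application of Lemma~\ref{lemma: ConvexCombCut}, which is routine, so I do not anticipate a real obstacle; the core content is simply the observation that convex combination plus symmetry forces uniformity on the orbit of edges under the automorphism group of the star.
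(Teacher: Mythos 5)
Your proof is correct and follows essentially the same approach as the paper: symmetrize an optimum sparsifier over all $k!$ permutations, invoke Lemma~\ref{lemma: ConvexCombCut} to preserve optimality, and conclude uniformity from symmetry. You spell out a couple of details the paper leaves implicit (transitivity of $S_k$ on unordered pairs, positivity of the resulting uniform weight), but the route is the same.
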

\begin{proof}
First, we observe by Lemma \ref{lemma: ConvexCombCut} that if we have two cut sparsifiers with the same quality, taking their convex combination gives a new cut sparsifier with the same or better quality. Suppose we are given some optimum cut sparsifier $H'$. We can generate $k!$ different cut sparsifiers by considering all possible permutations $\pi$ as defined above. By the above arguments, for each $\pi$, we know that $\pi(H')$ is also an optimum cut sparsifier. Taking the convex combination over $k!$ such sparsifiers, we obtain a complete graph $H$ with uniform edge-weights. 
\end{proof}

\begin{lemma}
If $H$ is uniform weighted complete graph that is an optimum cut sparsifier for the star graph $G$ and $k$ even, the edge weight must be at least $2/k$. 
\end{lemma}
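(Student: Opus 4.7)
The plan is to use only the lower side of the cut-sparsifier inequality, applied at a balanced terminal subset. Let $w > 0$ be the uniform edge weight of the complete graph $H$.

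First, I would compute both sides. Since $G$ is the unweighted star with center $v$ and leaf terminals $K$, any $S$-separating cut in $G$ has cost either $|S|$ (placing $v$ on the side of $K \setminus S$) or $k - |S|$ (placing $v$ on the side of $S$), so $\mincut_G(S, K \setminus S) = \min(|S|, k - |S|)$. On the $H$ side, because $V_H = K$, the only $S$-separating cut is $(S, K \setminus S)$ itself, and since $H$ is a complete graph with uniform weights, $\mincut_H(S, K \setminus S) = w \cdot |S| \cdot (k - |S|)$.

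Next, I would invoke the defining inequality of a cut sparsifier, namely $\mincut_H(S, K \setminus S) \ge \mincut_G(S, K \setminus S)$, which yields
\[
w \;\ge\; \frac{\min(|S|, k - |S|)}{|S| \cdot (k - |S|)}
\]
for every nonempty proper subset $S \subset K$. Since $k$ is even, I can instantiate this at $|S| = k/2$, in which case the right-hand side becomes $(k/2)/(k/2)^{2} = 2/k$. This directly gives $w \ge 2/k$, as claimed.

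There is no real obstacle here: the argument is a one-shot application of feasibility at the balanced cut. The hypothesis that $H$ is \emph{optimum} is not needed for this direction; it will be exploited in the companion matching upper bound that pins down $w$, and hence the quality, precisely. The only care point is noting that $\mincut_H$ coincides with the single cut $(S, K \setminus S)$ because $V_H = K$ leaves no room for a cheaper separating cut.
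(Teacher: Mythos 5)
Your argument is correct and is essentially the same as the paper's: both apply the domination side of the cut-sparsifier definition at the balanced cut $|S| = k/2$, comparing $\mincut_G = k/2$ against the $k^2/4$ uniform-weight crossing edges in $H$ to conclude $w \ge 2/k$. Your exposition is slightly more detailed in spelling out the general formula $\mincut_H(S,K\setminus S) = w\,|S|(k-|S|)$ before specializing, but the substance is identical.
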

\begin{proof}
By definition, $H$ must dominate the terminal cut that has $k/2$ vertices on one side. The minimum value of such a cut in $G$ is $k/2$. The number of edges that cross such a cut in $H$ is $k^{2}/4$. Since $H$ has uniform edge-weights, this gives that the edge weight must be at least $2/k$.
\end{proof}

\begin{theorem} Let $G = (K \cup \{v\}, E)$ be an unweighted star with $k$ terminals. Then, there is no cut sparsifier $H$ that achieves quality better than $2-o(1)$.
\end{theorem}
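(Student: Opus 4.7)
The plan is to combine the two preceding lemmas to get a clean structural description of an optimal sparsifier and then evaluate its quality on the worst possible cut, namely the cut isolating a single terminal.

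First, invoking the two preceding lemmas, I would assume without loss of generality that an optimum cut sparsifier $H$ for $G$ is a complete graph on $K$ with a single uniform edge-weight $w$, and that $w \geq 2/k$ (when $k$ is even; for odd $k$ one replaces the balanced bisection argument with the cut of size $\lfloor k/2 \rfloor$, yielding $w \geq 1/\lceil k/2 \rceil$, which is also $2/k - o(1/k)$). This reduces the problem to a one-parameter optimization: bound from below the quality of a uniform complete graph with edge-weight at least $2/k$ as a cut sparsifier for the unweighted star.

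Next, I would analyze the ratio $\capacity_H(\delta(S))/\mincut_G(S, K\setminus S)$ as $S$ ranges over subsets of $K$. In the star $G$, a subset $S \subseteq K$ with $|S| = s$ has $\mincut_G(S, K\setminus S) = \min(s, k-s)$, since cutting off the smaller side from the center $v$ is the cheapest separator. In $H$, the same cut crosses exactly $s(k-s)$ edges, each of weight $w$, so $\capacity_H(\delta(S)) = w \cdot s(k-s)$. For $s \leq k/2$, the ratio is $w(k-s)$, which is maximized at $s = 1$.

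Therefore, taking $S = \{x\}$ for any single terminal $x$, we obtain
\[
q \;\geq\; \frac{\capacity_H(\delta(S))}{\mincut_G(S, K\setminus S)} \;=\; w(k-1) \;\geq\; \frac{2(k-1)}{k} \;=\; 2 - \frac{2}{k},
\]
which is $2 - o(1)$ as $k \to \infty$. Since the transition from flow sparsifiers to cut sparsifiers only strengthens the lower bound direction (any flow sparsifier is also a cut sparsifier with the same or better quality), the same bound extends to flow sparsifiers as claimed in the introduction.

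I do not expect a substantive obstacle: the preceding symmetrization and averaging lemmas have already collapsed the space of candidate sparsifiers to a single parameter, and the singleton-cut evaluation is a direct computation. The only minor subtlety is matching the parity assumption of the previous lemma (which was stated for even $k$), but this can be handled by the small adjustment mentioned above, and the resulting error term is absorbed into the $o(1)$.
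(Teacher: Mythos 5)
Your proposal is correct and follows essentially the same route as the paper: invoke the symmetrization and averaging lemmas to reduce to a uniform complete graph with edge-weight at least $2/k$, then evaluate the quality on a singleton cut to get $q \geq 2(1-1/k) = 2-o(1)$. Your extra verification that the singleton is the worst case and the parity remark for odd $k$ are harmless additions but not needed for the lower bound.
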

\begin{proof}
By the above lemmas, we can assume without loss of generality that $H$ is a complete graph with uniform edge-weights, where this edge weight is at least $2/k$. Hence, a cut that has a singleton terminal vertex on one side has capacity $2(k-1)/k = 2(1-1/k)$ in $H$ but it has minimum cut value $1$ in $G$. The latter implies that the quality of $H$ must be at least $2(1-1/k)$.
\end{proof}

\section{Hardness of Vertex Sparsification in Trees} \label{section: hardness}

\subsection{Hardness of Cut Sparsifiers}
In this section we show that for a given graph $G=(V,E,c)$ with $K \subset V$, and $H=(V_H,E_H,c_H)$ with $V_H = K$, deciding whether $H$ is a cut sparsifier of $G$ is co-$\cal{NP}$-hard. Similarly to Checkuri et al.~\cite{ChekuriSOS07}, we give a reduction from the minimum expansion problem. Interestingly, the hardness result applies even if the input graph $G$ is a tree. 

Letting $\mathcal{C}_K = \{S : S \neq \emptyset, S \subset K, \; |S| \leq k / 2 \}$ denote the set of terminal cuts, we can restate definition of cut sparsifier in the case $V_H = K$ as follows: for a graph $G=(V,E,c)$ with $K \subset V$, and $H=(K,E_H,c_H)$, we say that $H$ is a \emph{cut sparsifier} of $G$ if $\forall S \in \mathcal{C}_K,~ \sum_{e \in \delta(S)} c_H (e) \geq \mincut_G(S, K \setminus S)$. Now we define the decision variant of the minimum expansion and the cut sparsifier problem.

\medskip 

\noindent
\textbf{The Minimum Expansion Problem. }Given a graph $H=(K,E_H,c_H)$ and some positive constant $\alpha$, decide if there exists a subset $S \in \mathcal{C}_K$ such that $\sum_{e \in \delta(S)} c_H (e) / |S| < \alpha$.

\medskip 

\noindent
\textbf{The Cut Sparsifier Problem. } Given a graph $G=(V,E,c)$ with $K \subset V$, and $H=(K,E_H,c_H)$, decide if $H$ a cut sparsifier of $G$.

For convenience, we reformulate the cut sparsifier problem using the notion of cut polytopes, which we define below.

\begin{definition} For a given graph $G=(V,E,c)$ with $K \subset V$, we define $P_{\text{cut}}(G)$ to be the polytope containing all cut sparsifiers of $G$, i.e.,
\[
	P_{\text{cut}}(G) := \{u \in \mathbb{R}_{\geq 0}^{\binom{k}{2}} : \forall S \in \mathcal{C}_{K}, ~ \sum_{e \in \delta(S)} u (e) / \mincut_G(S,K \setminus S) \geq 1\} \enspace.
\]
\end{definition}

Before proceeding, we observe a simple fact. Let $G=(K \cup \{v\},E)$ be a star, where each edge has capacity $\alpha > 0$. Then the symmetric structure of $G$ gives the following:

\begin{fact} \label{star: fact}
Let $S \in \mathcal{C}_K$ be any terminal cut and $G$ be the star graph defined above. Then, the minimum terminal cut of $S$ in $G$ equals the \emph{scaled} cardinality of set $S$, i.e., $h_G(S) = \alpha \cdot |S|$. 
\end{fact}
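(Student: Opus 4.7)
The plan is to enumerate the two possible placements of the non-terminal center $v$ with respect to any $S$-separating cut $(U, V \setminus U)$ in $G = (K \cup \{v\}, E)$, and compute the cut capacity in each case directly. Since $G$ is a star and the only non-terminal is $v$, the choice of $U$ is completely determined up to the placement of $v$: either $U = S$ or $U = S \cup \{v\}$ (equivalently, $V \setminus U = (K \setminus S) \cup \{v\}$).

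First, I would handle the case $U = S \cup \{v\}$. Here the cutset $\delta(U)$ consists exactly of the edges from $v$ to the terminals in $K \setminus S$, giving cut capacity $\alpha \cdot (k - |S|)$. Second, for $U = S$, the cutset is the set of edges from $v$ to the terminals in $S$, giving cut capacity $\alpha \cdot |S|$. Taking the minimum over these two options yields $h_G(S) = \alpha \cdot \min\{|S|, k - |S|\}$.

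Finally, I would invoke the assumption $S \in \mathcal{C}_K$, which guarantees $|S| \leq k/2$, hence $|S| \leq k - |S|$, so the minimum is attained in the second case and equals $\alpha \cdot |S|$. There is no real obstacle here; the argument is essentially a direct verification enabled by the fact that in a star the only non-trivial decision in cutting out a terminal subset is on which side of the cut to place the unique non-terminal center.
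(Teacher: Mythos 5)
Your proof is correct and is exactly the direct case-analysis the paper has in mind when it says that the ``symmetric structure of $G$'' gives this fact (the paper does not include an explicit proof, treating it as immediate). One tiny glitch in exposition: the parenthetical ``(equivalently, $V \setminus U = (K \setminus S) \cup \{v\}$)'' describes the complement of the \emph{first} option $U = S$, not the second option $U = S \cup \{v\}$; as written it reads as if it qualifies the second. This does not affect the substance, since you correctly compute both cut capacities and use $|S| \leq k/2$ to pick the smaller one.
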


\begin{theorem} Given a star $G = (K \cup \{v\},E)$ with uniform edge capacities $\alpha$, and some graph $H=(V,E_H,c_H)$, deciding whether $c_H \in P_\text{cut}(G)$ is co-$\mathcal{NP}$-hard.  
\end{theorem}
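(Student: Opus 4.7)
The plan is to reduce the Minimum Expansion Problem to the complement of the Cut Sparsifier Problem, inheriting $\mathcal{NP}$-hardness of the former. Concretely, given an instance $(H,\alpha)$ of Minimum Expansion, we simply take $G$ to be the star on $K \cup \{v\}$ with every edge of capacity $\alpha$, where $K=V(H)$, and ask whether $c_H \in P_{\text{cut}}(G)$.

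The correctness of the reduction is immediate from Fact~\ref{star: fact}. By that fact, $\mincut_G(S, K\setminus S) = \alpha \cdot |S|$ for every $S \in \mathcal{C}_K$, so unrolling the definition of $P_{\text{cut}}(G)$ yields
\[
c_H \in P_{\text{cut}}(G) \iff \forall S \in \mathcal{C}_K,\; \sum_{e \in \delta(S)} c_H(e) \geq \alpha \cdot |S| \iff \forall S \in \mathcal{C}_K,\; \frac{\sum_{e \in \delta(S)} c_H(e)}{|S|} \geq \alpha.
\]
Thus $c_H \notin P_{\text{cut}}(G)$ holds precisely when there exists $S \in \mathcal{C}_K$ witnessing expansion strictly less than $\alpha$, i.e., precisely when $(H,\alpha)$ is a YES-instance of Minimum Expansion. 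Hence the Cut Sparsifier Problem on this restricted family of inputs is equivalent, up to complement, to Minimum Expansion.

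To conclude, I would cite the $\mathcal{NP}$-hardness of the Minimum Expansion decision problem (equivalently, uniform-demand sparsest cut), which is well known; the reduction is clearly polynomial-time (we output a star of size $k+1$ with weight $\alpha$ on each edge, plus the graph $H$ itself unchanged). Consequently Minimum Expansion reduces to the complement of the Cut Sparsifier Problem, so the Cut Sparsifier Problem is co-$\mathcal{NP}$-hard, even when the input graph $G$ is a star with uniform edge capacities. The only mildly delicate point is ensuring the version of Minimum Expansion being reduced from uses exactly the family $\mathcal{C}_K$ of cuts with $|S| \leq k/2$ and strict inequality; this matches the standard formulation, so no additional argument is required beyond pointing to the appropriate reference.
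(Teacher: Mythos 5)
Your proposal is correct and follows essentially the same route as the paper: the same reduction from Minimum Expansion to the complement of the Cut Sparsifier Problem via a uniform-capacity star, the same use of Fact~\ref{star: fact} to translate mincut values to scaled set sizes, and the same appeal to the known $\mathcal{NP}$-hardness of Minimum Expansion. The equivalence chain you spell out is just a slightly more explicit version of the paper's argument.
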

\begin{proof}
Given an instance of the Minimum Expansion Problem, i.e., a graph $H=(K,E_H,c_H)$ and some positive constant $\alpha$, we construct an instance of the Cut Sparsifier Problem by building a star graph $G=(K \cup \{v\}, E)$, where each edge has capacity $\alpha$ and letting $H$ be the candidate cut sparsifier of $G$. We claim that $H$ has expansion strictly less than $\alpha$ iff $c_H \not\in P_\text{cut}(G)$. 

Indeed, by Fact \ref{star: fact}, if $G$ has expansion strictly less than $\alpha$, then there exists a set $S \in \mathcal{C}_K$ such that $\sum_{e \in \delta(S)} u(e) / |S| < \alpha$. This implies that $\sum_{e \in \delta(S)} u(e) / \alpha |S| < 1$, and thus $c_H \not\in P_\text{cut}(G)$. The other direction is symmetric.

Since it is known that the decision variant of the Minimum Expansion Problem is $\mathcal{NP}$-hard (see~\cite{LeightonR99}), the co-$\mathcal{NP}$-hardness of the Cut Sparsifier Problem follows.
\end{proof}
\subsection{Hardness of Single-Source Flow Sparsifiers}

Similarly to the previous section, we can define \emph{the Flow Sparsifier Problem}: Given a graph $G=(V,E,c)$ with $K \subset V$, and $H = (K,E_H,c_H)$ decide if $H$ is a flow sparsifier of $G$, or equivalently,  if
\begin{equation} \label{eq: flow}
\forall d \in \mathbb{R}_{+}^{\binom{k}{2}},~ \con_H(d) \leq \con_G(d) \enspace.
\end{equation} 
In the following we give an equivalent definition for the above problem, which makes clearer the connection between our problem and a variant of the robust network design problem~~\cite{ChekuriSOS07}. For a graph $G$ defined as above, we let \[P_G=\{d \in \mathbb{R}_{+}^{\binom{k}{2}} : \con_G(d) \leq 1\} \enspace,\] be the demand polytope of $G$, namely, the polytope consisting of all demands that can be feasibly routed in $G$. We associate with $P_G$ a convex region $\mathcal{U}(P_G)$. For a given graph $H=(K,E_H,c_H)$, we say that $c_H$ is in $\mathcal{U}(P_G)$, if for each demand $d \in P_G$, there exist a flow that can \emph{feasibly route} $d$ in $H$. Now, it is clear that instead of asking  (\ref{eq: flow}), we can alternatively ask whether $c_H \in \mathcal{U}(P_G)$. The latter is exactly the separation problem of \emph{The Robust Network Design Problem} (abbr. RND). 

Checkuri et al.~\cite{ChekuriSOS07} have shown that the separation problem of some special version of RND is co-$\mathcal{NP}$-hard. Below we perform some modifications to adapt our problem to theirs. 

First, we remark that in the original definition of flow sparsifiers due to Leighton and Moitra~\cite{leighton}, the entries of some demand vector in some input graph $G$ can be positive for every undirected pair of terminals. Here, we will assume that we are given a distinguished terminal $r$ and the only positive entries of a demand vector $d$ are those that correspond to pairs involving $r$, i.e., $d_{r,x} > 0, $ for all $x \in K \setminus \{r\}$. This variant naturally leads to the notion of \emph{single-source} flow sparsifiers. 

Next, given a $k$-dimensional vector $b$, we need to construct an instance of single source flow sparsifier for a graph $G$ with demand polytope $P_G$ defined as follows (see Section $2$ in \cite{ChekuriSOS07})
\begin{equation} \label{flow: polytope}
\begin{aligned}
	\sum_{x \neq r} d_{rx} & \leq b_r &  \\
	d_{rx} & \leq b_x & \forall x \neq r, \\
    d_{x'x} & = 0 &   x' \neq r, \\
           d & \geq 0 & \enspace.
\end{aligned}
\end{equation}
To this end, given $b$, define the tree $G = (K \cup \{v\},E)$ with $K \subset V$, where $r$ is the distinguished terminal and $E= \{(x,v) : x \in K \setminus \{r\}\} \cup \{(x,v)\}$. We assign capacity $b_x$ to the edge $(x,v)$, for all $x \in K \setminus \{r\}$, and capacity $b_r$ to the edge $(r,v)$. Now, since $G$ is a tree and the routing paths are unique, one can easily observe that the demand polytope $P_G$ in the single-source flow sparsifier problem is exactly the polytope given in (\ref{flow: polytope}). Thus, it follows that the Single-Source Flow Sparsifier Problem is equivalent to the separation problem of the Single-Source RND. 

Checkuri et al.~\cite{ChekuriSOS07} devised a hardness result for the Single-Source RND, which by the above equivalence leads to the same hardness result for the Single-Source Flow Sparsifier Problem:

\begin{theorem}
Given a tree $G=(V,E,c)$ with $K \subset V$ defined as above and some graph $H = (K,E_H,c_H)$, deciding whether $c_H \in \mathcal{U}(P_G)$ is co-$\mathcal{NP}$-hard.
\end{theorem}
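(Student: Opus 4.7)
The plan is to formalize the reduction sketched in the paragraph preceding the theorem: start from the separation problem for Single-Source RND, which Chekuri et al.~\cite{ChekuriSOS07} showed to be co-$\mathcal{NP}$-hard, and convert each instance into an instance of the Single-Source Flow Sparsifier Problem on a star-shaped tree in polynomial time, while preserving YES/NO answers.

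First, I would recall the setup: an instance of the separation problem for Single-Source RND consists of a vector $b \in \mathbb{R}_{+}^{k}$ (defining the demand polytope in (\ref{flow: polytope})) together with a candidate capacity vector $c_H$ on $\binom{K}{2}$; one must decide whether $c_H$ can feasibly route every demand $d$ in that polytope. Given such an instance, construct the tree $G = (K \cup \{v\}, E)$ already defined above, with capacity $b_x$ on edge $(x,v)$ for $x \in K \setminus \{r\}$ and capacity $b_r$ on edge $(r,v)$. This construction is clearly polynomial time.

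The core step is to verify the identity $P_G = \{d : d_{rx}\ge 0,\ \sum_{x\ne r} d_{rx}\le b_r,\ d_{rx}\le b_x\ \forall x\ne r,\ d_{x'x}=0\text{ for }x'\ne r\}$. Because $G$ is a tree, there is a unique routing path for each terminal pair, and only pairs involving $r$ send nonzero demand; the congestion of edge $(x,v)$ for $x\ne r$ is exactly $d_{rx}/b_x$, while the congestion of $(r,v)$ is $\sum_{x\ne r} d_{rx}/b_r$. Requiring all these ratios to be at most $1$ yields precisely the polytope (\ref{flow: polytope}). Hence $c_H \in \mathcal{U}(P_G)$ if and only if $c_H$ is a YES-instance for the Single-Source RND separation problem.

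The final step is the invocation: since the Single-Source RND separation problem is co-$\mathcal{NP}$-hard \cite{ChekuriSOS07}, so is the Single-Source Flow Sparsifier Problem on such trees, which is the desired statement. I do not expect a real obstacle here; the only subtle point is making the equivalence of polytopes airtight, i.e.\ confirming that the ``single-source'' restriction on the demand side matches the tree construction exactly (no hidden multicommodity couplings arise, since each demand pair uses a pair of disjoint edges meeting only at $v$). Once this routing-paths-are-unique observation is recorded, the reduction and the conclusion follow immediately.
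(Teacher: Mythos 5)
Your proof takes exactly the paper's route: reduce from the co-$\mathcal{NP}$-hard separation problem for Single-Source RND by constructing the star $G$ with edge capacities $b_x$ and checking that $P_G$ coincides with the polytope in (\ref{flow: polytope}) via the uniqueness of routing paths in a tree. This is precisely the argument the paper gives, and the verification of the polytope identity is the same key step.
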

 	
\section{Improved Results for Quasi-Bipartite Graphs} \label{section: applications}
In this section, we present two new tradeoffs for flow sparsifiers in
quasi-bipartite graphs. For this family of graphs, Andoni et al.~\cite{andoni}
show how to obtain flow sparsifier with very good quality and moderate size.
Specifically, they obtain an $(1+\varepsilon)$-quality flow sparsifier of size
$\widetilde{O}(k^{7}/\varepsilon^3)$. In the original definition of flow
sparsifiers, Leighton and Moitra~\cite{leighton} studied the version where
sparsifiers lie only on the terminals, i.e., $V_H = K$. For this restricted
setting, we obtain a flow sparsifier of quality $2$.

Exact Cut Sparsifier (a.k.a Mimicking Networks) were introduced by Hagerup et
al.~\cite{HagerupKNR98}. In their work they show that general graphs admit
exact cut sparsifiers of size doubly exponential in $k$. As a second result, we
show that unit weighted quasi-bipartite graphs admit an exact flow sparsifier
of size $2^{k}$.

A graph $G$ with terminals $K$ is \emph{quasi-bipartite} if the non-terminals
form an independent set. Throughout this section we assume w.l.o.g.\ that we are
given a bipartite graph with terminals lying on one side and non-terminals in
the other (this can achieved by subdividing terminal-terminal edges).

\paragraph{\mathversion{bold}A $2$-quality flow sparsifier of size $k$.\mathversion{normal}}
Assume we are given an unweighted bipartite graph $G$ with terminals $K$. The
crucial observation is that we can view $G$ as taking union over stars, where
each non-terminal is the center connected to some subset of terminals. Lemma
\ref{lemma: Merging} allows us to study these stars independently. Then, for
every such star, we apply Lemma \ref{lemma: TreeTerminalLeaves} to obtain a
flow sparsifier only on the terminals belonging to that star. Finally, we merge
the resulting sparsifiers and construct a sparsifier $H$ with $V(H) = K$ by
another application of Lemma \ref{lemma: TreeTerminalLeaves}. Since the quality
of every star in isolation is $2$ or better, $H$ is also a $2$-quality flow
sparsifier.

We note that Lemma \ref{lemma: TreeTerminalLeaves} only works for unweighted
trees. There is an easy extension that gives a similar lemma for weighted
stars.
\begin{lemma} \label{lemma: WeightedStar}
Let $G=(K \cup \{u\}, E, c)$ be a weighted star with $k$ terminals. Then $G$ admits a $2$-quality flow sparsifier $H$ of size $k$.
\end{lemma}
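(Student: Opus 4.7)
The plan is to imitate Algorithm~\ref{algo: contraction2} but with a weight-dependent distribution over the $k$ possible connected $0$-extensions of the star. Let $C=\sum_{j=1}^{k} c_j$ and, for each $i\in\{1,\ldots,k\}$, let $H_i$ denote the $0$-extension obtained by mapping the center $u$ to the terminal $x_i$; then $H_i$ is itself a star centered at $x_i$ whose edge $(x_i,x_j)$ has capacity $c_j$ for $j\neq i$. I would define
\[
H=\sum_{i=1}^{k} p_i H_i,\qquad p_i=\frac{c_i}{C},
\]
so that the $0$-extension is sampled with probability proportional to the corresponding edge weight. A direct computation gives $c_H(x_i,x_j)=p_ic_j+p_jc_i=2c_ic_j/C$.

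Next I would verify that $H$ dominates every terminal cut of $G$: since each $H_i$ arises from a contraction of $G$, its $S$-separating cuts can only be larger than the corresponding ones in $G$, and the same is therefore true for the convex combination. This places $H$ in the setting of Lemma~\ref{lemma: Moitra}, so its quality equals $\con_G(d_H)$ for $d_H=c_H$. In the star $G$ the routing of $d_H$ is forced---each pair $(x_i,x_j)$ sends its demand along the unique length-two path through $u$---so the load on an edge $(x_m,u)$ is
\[
\sum_{j\neq m}\frac{2c_m c_j}{C}=\frac{2c_m(C-c_m)}{C},
\]
giving congestion $2(C-c_m)/C\leq 2$. Maximizing over $m$ yields the claimed $2$-quality bound, and $H$ has exactly $k$ vertices by construction.

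I do not anticipate a substantive obstacle; the only creative step is choosing $p_i=c_i/C$ rather than the uniform distribution that sufficed in the unweighted case of Lemma~\ref{lemma: TreeTerminalLeaves}. This weighting is essentially forced: the ``diagonal'' contribution $p_m(C-c_m)$ to the load on $(x_m,u)$ must scale with $c_m$, else a heavy edge is over-congested. Once the right distribution is identified, the cancellation producing the clean bound $2-2c_m/C$ is automatic.
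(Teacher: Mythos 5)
Your proposal matches the paper's proof essentially step for step: you use the same weight-proportional distribution $p_i=c_i/C$ over contractions, derive the same sparsifier capacities $c_H(x_i,x_j)=2c_ic_j/C$, and invoke Lemma~\ref{lemma: Moitra} to bound the quality via the congestion $2(C-c_m)/C\le 2$ on each edge. The only cosmetic difference is that the paper justifies $H$ being a flow sparsifier by citing Lemma~\ref{lemma: ConvexComb} directly, whereas you give a brief cut-domination argument instead; both are fine.
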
 
\begin{proof}
Let $C = \sum_{x=1}^{k} c(u,i)$ be the sum over all edge capacities in $G$. Note that by contracting the star edge $(u,x)$ we get a flow sparsifier $H_{x}$ of quality $\sum_{x' \neq x} c(u,x') / c(u,x)$. There are at most $k$ such sparsifiers. Now we construct a sparsifier $H$ where the edge $(u,x)$ is contracted with probability $c(u,x)/C$. Equivalently,  $H = \sum_{x=1}^{k} (c(u,x)/C) \cdot H_{x}$ by Lemma \ref{lemma: ConvexComb}.

We observe that $H$ is a complete graph on the terminals, where $c_{H}(x,x') = 2 \cdot  c(u,x) \cdot c(u,x') / C$. By Lemma \ref{lemma: Moitra}, routing the demand $d_{H}$ in $G$ gives the following upper bound on the congestion of any edge $(u,x)$ in $G$:
\begin{equation*}
	2 \left(\frac{ c(u,x) \cdot \sum_{x' \neq x} c(u,x') }{c(u,x) \cdot C}\right)  = 2 \left(\frac{ c(u,x) \cdot (C - c(u,x)) }{c(u,x) \cdot C}\right)
	 = 2 \left(1 - \frac{c(u,x)}{C}\right) \enspace.
\end{equation*}
The latter implies that $H$ is a $2$-quality flow sparsifier for $G$.
\end{proof}

Applying the decomposition and merging lemma similarly to the unweighted case leads to the following theorem:
\begin{theorem} Let $G = (V,E,c)$ with $K \subset V$ be a weighted
  quasi-bipartite graph. Then $G$ admits a $2$-quality flow sparsifier $H$ of
  size $k$.
\end{theorem}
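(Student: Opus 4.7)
The plan is to decompose $G$ into weighted stars centered at the non-terminals, apply Lemma \ref{lemma: WeightedStar} to each, and then glue the resulting sparsifiers back together via Lemma \ref{lemma: Merging}. As the section preamble notes, after subdividing every terminal--terminal edge by inserting a fresh non-terminal of degree two (with both new edges inheriting the original capacity), we may assume without loss of generality that $G$ is bipartite with $K$ on one side and $V \setminus K$ on the other. This subdivision preserves all multicommodity flows exactly, since the two new edges share the bottleneck of the original.

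For each non-terminal $u$, let $S_u = (K_u \cup \{u\}, E_u, c)$ denote the weighted star consisting of $u$ and all its incident edges; by the independent-set property the leaves $K_u \subseteq K$ are all terminals. Enumerate the non-terminals as $u_1, \ldots, u_m$. Define $G^{(1)} = S_{u_1}$ and inductively $G^{(i)} = G^{(i-1)} \oplus_{\phi_i} S_{u_i}$, where $\phi_i$ identifies each shared terminal of $K_{u_i}$ with the corresponding terminal already present in $G^{(i-1)}$. A routine induction on $i$ shows that $G^{(m)} = G$.

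Next, applying Lemma \ref{lemma: WeightedStar} to each $S_{u_i}$ yields a $2$-quality flow sparsifier $H_{u_i}$ supported entirely on $K_{u_i}$. Set $H^{(1)} = H_{u_1}$ and iteratively $H^{(i)} = H^{(i-1)} \oplus_{\phi_i} H_{u_i}$ using the same matchings $\phi_i$. Repeated application of Lemma \ref{lemma: Merging} gives that each $H^{(i)}$ is a flow sparsifier for $G^{(i)}$ of quality $\max\{\text{quality}(H^{(i-1)}), 2\} = 2$. Taking $H := H^{(m)}$ yields a $2$-quality flow sparsifier for $G$, and since every $H_{u_i}$ lives on $K_{u_i} \subseteq K$ we have $V(H) \subseteq K$, so $|V(H)| \le k$.

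The proof is almost entirely bookkeeping, and the main point to verify is that the iterated merges are consistent: each terminal $x \in K$ must be identified with itself across every star that contains it, so the outcome of the gluing does not depend on the order $u_1, \ldots, u_m$. This holds because the matching maps $\phi_i$ are all restrictions of the identity on $K$, which makes the $\oplus_\phi$ operation associative in the relevant sense. No further analytical obstacle arises; the quality bound of $2$ is inherited directly from Lemma \ref{lemma: WeightedStar} via the merging lemma.
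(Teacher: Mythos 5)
Your proof follows the same approach as the paper: decompose $G$ into weighted stars at the non-terminals, apply Lemma \ref{lemma: WeightedStar} to each star, and glue the results back with Lemma \ref{lemma: Merging}, after first subdividing terminal--terminal edges to reach the bipartite form. You merely spell out the bookkeeping (the iterated $\oplus_\phi$ and the associativity of the identifications) that the paper leaves implicit, so the argument is correct and essentially identical.
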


\paragraph{An exact flow sparsifier of size \mathversion{bold}$2^{k}$\mathversion{normal}.}
In what follows it will be convenient to work with an equivalent definition for
Flow Sparsifiers. Let $\lambda_G(d)$ denote the maximum fraction of
\emph{concurrent flow} when routing demand $d$ among terminals in graph $G$.
Then $H=(V_H,E_H,c_H)$ with $K \subset V_H$ is a \emph{flow sparsifier} of $G$
with quality $q \geq 1$ if for all demand functions $d$, $\lambda_G(d) \leq
\lambda_H(d) \leq q \cdot \lambda_G(d)$.

The high level idea of our approach is to create ``types'' for non-terminals
and then merge all non-terminals of the same type into a single non-terminal
(i.e., add infinity capacity among all non-terminals of the same type). The
main difficulty is to define the right types and show that the merging does not
affect the multi-commodity flow structure among the terminals. A similar
approach was developed by Andoni et al.~\cite{andoni}, but their guarantees
applies only to approximate flow sparsifier.

We start by defining types. We say that two non-terminals $u,v$ are of the same
type if they are incident to the same subset of terminals. Non-terminals of the
same type form groups. Note that a non-terminal belongs to an unique group. The
\emph{size} of the group is the number of non-terminals belonging to that
group. Since the set of non-terminals is an independent set, by Lemma
\ref{lemma: Merging}, we can construct sparsifiers for each group
independently. Our final sparsifier is obtained by merging the sparsifiers over
all groups. By another application of Lemma \ref{lemma: Merging}, if the
sparsifiers of the groups are exact flow sparsifiers, then the final sparsifier
is also an exact flow sparsifier for the original graph.

Next, if we replace each group by a single non-terminal, then the size
guarantee of the final sparsifier follows from the fact that there are at most
$2^{k}$ different subsets of terminals. Below we formalize the merging
operation within groups.

Let $G_i = (K' \cup \{v_1, \ldots, v_{n_i}\}, E_i, c)$ be a group of size $n_i
\geq 2$, where $E_i = \{\{v_j,x\}: j \in \{1,\ldots,n_i\}, \; x \in K'\}$, $K'
\subseteq K$ and $c(e) = 1$, $e \in E_i$. We get:
\begin{lemma} \label{lemma: Group} Let $G_i$ with $K' \subset V(G_i)$ be a
  group of size $n_i \geq 2$ defined as above. Then $G_i$ can be replaced by a
  star $H_i = (K' \cup \{v_1\}, E_{H_i}, c_{H_i})$ with edge weights
  $c_{H_i}(e) = n_i$, for all $e \in E_{H_i}$, and which preserves exactly all
  multicommodity flows between terminals from $K'$.
\end{lemma}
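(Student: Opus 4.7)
The plan is to show that the maximum concurrent flow values in $G_i$ and $H_i$ agree on every demand $d$ between terminals of $K'$, which by the definition recalled at the start of this section is exactly what "preserves exactly all multicommodity flows" means. Fix a symmetric demand $d$ on $K'$ and write $D_x := \sum_{x' \in K' \setminus \{x\}} d(x,x')$ for the total demand incident to each terminal $x$.

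First I would handle the star $H_i$. Every route between a pair of terminals is forced through $v_1$, so flow conservation at $x$ dictates that the flow across $\{v_1,x\}$ equals $\lambda \cdot D_x$ at concurrent throughput $\lambda$. Since that edge has capacity $n_i$, the binding constraint is $\lambda \le n_i/D_x$ at each terminal, giving $\lambda_{H_i}(d) = \min_{x \in K'} n_i/D_x$.

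Next I would establish the same value for $G_i$ with matching upper and lower bounds. For the upper bound, flow conservation at any terminal $x$ gives $\sum_{j=1}^{n_i} f(v_j,x) = \lambda \cdot D_x$ in any feasible routing; because every edge $\{v_j,x\}$ has unit capacity, the left-hand side is at most $n_i$, yielding $\lambda_{G_i}(d) \le \min_{x \in K'} n_i/D_x$. For the matching lower bound I would exploit the symmetry of the complete bipartite structure of $G_i$: route $\lambda \cdot d(x,x')/n_i$ units of each demand pair $(x,x')$ along the two-hop path $x \to v_j \to x'$ for each $j \in \{1,\ldots,n_i\}$. The total load on a terminal-incident edge $\{v_j,x\}$ is then $\lambda D_x/n_i$, which stays at most $1$ exactly when $\lambda \le n_i/D_x$.

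The main obstacle is simply checking that this symmetric routing is actually feasible in $G_i$ with no other capacity becoming binding, but this is immediate: every edge of $G_i$ is terminal-incident and has unit capacity, and the splitting treats all $n_i$ non-terminals identically, so the only constraints that arise are the ones already accounted for above. Comparing the two calculations gives $\lambda_{G_i}(d) = \lambda_{H_i}(d)$ for every demand $d$, which is precisely the claim that $H_i$ preserves multicommodity flows exactly.
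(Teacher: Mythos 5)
Your proof is correct, and it takes a genuinely different and more elementary route than the paper. The paper establishes $\lambda_{H_i}(d) \le \lambda_{G_i}(d)$ by passing to the dual (fractional sparsest cut): it takes an optimal dual solution for $G_i$, replaces each edge length $\ell_{v_j x}$ by the average over $E_x$, checks this averaging preserves feasibility and objective value, pads with zero-length edges among the $v_j$'s, and concludes the averaged solution is feasible for the star's dual. You instead work directly on the primal: you compute $\lambda_{H_i}(d) = \min_x n_i/D_x$ from the forced routing through $v_1$, obtain the matching upper bound $\lambda_{G_i}(d) \le \min_x n_i/D_x$ from flow conservation and unit capacities at each terminal's incident-edge set, and then exhibit the symmetric splitting of each commodity across all $n_i$ two-hop paths to realize this value in $G_i$. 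This not only proves the equality but gives a closed-form expression for the concurrent flow on both sides, which the paper's duality argument does not. One small point worth tightening: in the upper-bound step the total flow across $\{\{v_j,x\}\}_j$ is not necessarily exactly $\lambda D_x$ since a longer path could pass through $x$ as an intermediate vertex, but it is always at least $\lambda D_x$, which is all the bound needs. The paper's averaging argument is slightly more robust as a technique, but your explicit construction is both shorter and more informative here.
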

Taking the union over all sparsifiers $H_i$ 
leads to the following theorem:
\begin{theorem}
  Let $G = (V,E)$ with $K \subset V$ be a unit weighted quasi-bipartite graph.
  Then $G$ admits an exact flow sparsifier $H$ of size at most $2^{k}$.
\end{theorem}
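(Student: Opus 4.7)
The plan is to decompose $G$ along its non-terminals (which form an independent set since $G$ is quasi-bipartite), bucket them by the subset of $K$ they are adjacent to, and then collapse each bucket using Lemma \ref{lemma: Group}. The bound on the number of buckets comes for free from the observation that a bucket is determined by a subset of $K$, of which there are at most $2^k$.

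Concretely, I would first define, for each non-terminal $v$, its \emph{type} $N(v) \cap K \subseteq K$. Partition the non-terminals into groups $\{G_i\}$, where all non-terminals in $G_i$ share a common type $K_i \subseteq K$. Because the non-terminals form an independent set, the edge set of $G$ is exactly the disjoint union of the edge sets of the $G_i$, so $G$ can be written as an iterated $\phi$-merge of the subgraphs $G_i$ (each $G_i$ shares only the terminals in $K_i$ with the rest of the graph). This puts us in the setting of Lemma \ref{lemma: Merging}: if we can build an exact flow sparsifier for each $G_i$, their merge is an exact flow sparsifier for $G$.

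Next, I would handle each $G_i$ separately. If $|G_i| = 1$ (a single non-terminal of that type), there is nothing to do — the group itself is already the sparsifier, contributing one non-terminal. If $|G_i| = n_i \geq 2$, Lemma \ref{lemma: Group} directly replaces $G_i$ by a star $H_i$ on $K_i \cup \{v_1\}$ with uniform edge weight $n_i$, preserving multicommodity flows among $K_i$ exactly. In either case, each group contributes exactly one non-terminal to the sparsifier $H$.

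Finally, for the size bound, note that the type of a non-terminal is an element of $2^K$, so the number of groups is at most $2^k$. Hence $H$ contains at most $k + 2^k = O(2^k)$ vertices. Exactness follows by iteratively applying Lemma \ref{lemma: Merging} across the groups, since each group sparsifier has quality $1$. There is no real obstacle here once Lemma \ref{lemma: Group} is in hand; the only thing to keep an eye on is that the merging operation is well-defined across groups that share overlapping terminal subsets $K_i \cap K_j$, which is immediate because $\oplus_\phi$ only requires identifying common terminals and can be applied iteratively.
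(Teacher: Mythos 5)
Your proposal follows exactly the same route as the paper: partition the non-terminals into groups by the subset of $K$ they are adjacent to, use Lemma~\ref{lemma: Merging} to reduce to one group at a time, collapse each group of size $\geq 2$ via Lemma~\ref{lemma: Group} (and leave singleton groups as they are), and bound the number of groups by the number of subsets of $K$. The argument is correct and matches the paper's proof in structure and content.
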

\begin{proof}[Lemma \ref{lemma: Group}] 
  First, observe that we can think of $H_i$ as adding infinity capacity edges
  between non-terminals in $G_i$. Then merging into a single non-terminal is
  done by simply adding edge weights incident to the same terminal. More
  precisely, let $E_{H_i} = \{(v_r, v_s) : r,s = 1,\ldots, n_i, \; r \neq s\}$.
  Then, we can assume that $H_i = (K' \cup \{v_1, \ldots v_{n_i}\}, E_i \cup
  E_{H_i}, c_{H_i})$ where $c_{H_i} (e) = c(e)$ if $e \in E_i$ and $c_{H_i}(e)
  = \infty$ if $e \in E_{H_i}$.

  Since we can route every feasible demand from $G_i$ in $H_i$ even without
  using the infinity-capacity edges, it is immediate that for any demand function
  $d$, $\lambda_{H_i}(d) \geq \lambda_{G_i}(d)$. Thus, we only need to show
  that $\lambda_{H_i}(d) \leq \lambda_{G_i}(d)$. To achieve this, we will use
  the dual to the maximum concurrent flow problem (i.e., the Fractional
  Sparsest Cut Problem). The dual problem is the following\footnote{Note that
    the dual requires that $\delta_{st}$ is at most the length of the shortest
    $s$-$t$ path. In our scenario this is always a $2$-hop path. Hence, the
    above formulation is correct.}:
\begin{equation} \label{2:eq9}
\begin{aligned}
& \text{min} & & \sum_{j=1}^{n_i} \nolimits \sum_{x \in K'} \nolimits \ell_{v_jx} \\
& \text{s. t.} & &\ell_{sv_j} + \ell_{v_jt}  \geq \delta_{st}  && \forall \{s,t\} \in \tbinom{K'}{2}, \; \forall j \in \{1,\ldots,n_i\} \\
& &&\sum_{ \{s,t\} \in \binom{K'}{2}} \nolimits d_{st} \delta_{st} \geq 1 \\
& & & \ell_e \geq 0, \quad \delta_{st} \geq 0 \enspace.
\end{aligned}
\end{equation}
Let $d$ be an arbitrary demand function. Moreover, let $\{\ell_e, \delta_{st}\}$
be an optimal solution of value $\lambda_{G_i}(d)$ for the LP in Eqn.
(\ref{2:eq9}), where $\delta_{st}$ is the shortest-path distance induced by the
length assignment $\ell$. We first modify this solution and get a new feasible
solution with the same cost and a certain structure that we will later exploit.

The modification works as follows. For every terminal we create a set of edges incident to that terminal. Then, within each set, we
replace the length of each edge by the total \textit{average} length of the
group. Specifically, for every $x \in K'$, let $E_{x} = \{(v_j, x) : j = 1,
\ldots, n_i\}$ be the set of edges incident to $x$. 

The new edge lengths are defined as follows: $\widetilde{\ell}_{v_jx} = {\sum_{e \in E_{x}} \nolimits \ell_{e}}/{n_i}, \forall x \in K', \forall j = 1,\ldots,n_i$.
Let $\smash{\widetilde{\delta}_{st}}$ be the new shortest-path distance induced by the length assignment $\widetilde{\ell}$. In order for $\{\widetilde{\ell}_e, \widetilde{\delta}_{st}\}$ to be feasible, we need to show that $\widetilde{\delta}$ dominates $\delta$, i.e., $\widetilde{\delta}_{st} \geq \delta_{st}$, for every pair $s,t \in K'$. Indeed, since edge lengths within groups are the same, we get that for every pair $s,t \in K'$:
\begin{equation*}
\begin{aligned}
	\widetilde{\delta}_{st} = \widetilde{\ell}_{sv_1} + \widetilde{\ell}_{v_1t} & = \frac{1}{n_i}{\sum_{e \in E_{s}} \nolimits \ell_{e} + \displaystyle\frac{1}{n_i}\sum_{e \in E_{t}} \nolimits \ell_{e}} =  \frac{1}{n_i}{\sum_{j = 1}^{n_i} \nolimits \left(\ell_{sv_j} + \ell_{v_jt}\right)}\\
	& \geq \min_{j \in \{1,\ldots,n_i\}}\{\ell_{sv_j} + \ell_{v_jt}\} \geq \delta_{st} \enspace.
\end{aligned}
\end{equation*}
Additionally, observe that the new solution has the same optimal value, namely 
\[\lambda^{*}_{G'_i}(d) = \sum_{j=1}^{n_i}{\sum_{x \in K'}} \ell_{v_jx} = \sum_{j=1}^{n_i}{\sum_{x \in K'}} \widetilde{\ell}_{v_jx}\enspace.\]
Hence, we can assume without loss of generality that an optimal solution satsifies:
$\smash{\widetilde{\ell}_{v_1x}} = \ldots = \smash{\widetilde{\ell}_{v_{n_i}}x}, \; \forall x \in K'$.
Now, we add edges $(v_i,v_j)$ to $G_i$ and set $\widetilde{\ell}_{v_i v_j} =
0$, for all $i,j = 1,\ldots, n_i$. Note that shortest-path distances
$\widetilde{\delta}_{st}$ do not change by this modification. Therefore, by
adding these zero edge lengths between the non-terminals, we still get an
optimum solution $\{\widetilde{\ell}_e, \widetilde{\delta}_{st}\}$ for the LP
in (\ref{2:eq9}).

Finally, let us define the dual problem for the star $H_i$:
\begin{equation} \label{2:eq12}
\begin{aligned}
& \text{min} & & \sum_{j=1}^{n_i} \nolimits \sum_{x \in K'} \nolimits \ell_{v_jx} \\
& \text{s. t.} & &  \sum_{e \in P_{st}} \nolimits \ell_e   \geq \delta_{st}  
\quad\quad\forall \{s,t\} \in \tbinom{K'}{2}, \;\forall s\text{-}t \text{ paths on } E \cup E_{H_i} \\
& & &\sum_{ \{s,t\} \in \binom{K'}{2}} \nolimits d_{st} \delta_{st} \geq 1 \\
& & & \ell_e \geq 0, \quad \delta_{st} \geq 0,\quad  \forall e \in
E_{H_i}\;\ell_e = 0\enspace.
\end{aligned}
\end{equation}
It follows from above that $\{\widetilde{\ell}_e, \widetilde{\delta}_{st}\}$ is a feasible solution for the LP in (\ref{2:eq12}). Hence, $\lambda_{H_i}(d) \leq \lambda_{G_i}(d)$, what we were after.
\end{proof}


\bibliographystyle{plainurl}
\bibliography{literature}

\appendix

\section{Missing Proofs}
We first state the following simple fact:
\begin{fact} \label{2:fact1}
If $a_1,\ldots,a_k$ and $b_1,\ldots,b_k$ are positive numbers, then
\[
	\frac{\sum_{i}^{k} a_i}{\sum_{i}^{k} b_i} \leq \max_{i = 1,\ldots, k} \frac{a_i}{b_i}.
\] 
\end{fact}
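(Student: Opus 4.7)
The plan is to prove this by a one-line averaging argument: a weighted mean cannot exceed its largest component. Let $M = \max_{i=1,\ldots,k} a_i/b_i$ and let $i^\star$ be an index attaining this maximum. Since every $b_i$ is positive, the definition of $M$ gives $a_i / b_i \leq M$, i.e., $a_i \leq M \cdot b_i$, for each $i = 1,\ldots,k$.

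Next I would sum these $k$ inequalities to obtain $\sum_{i=1}^{k} a_i \leq M \cdot \sum_{i=1}^{k} b_i$. Since the $b_i$ are positive, so is $\sum_{i=1}^{k} b_i$, and we may divide both sides by it without reversing the inequality, yielding
\[
\frac{\sum_{i=1}^{k} a_i}{\sum_{i=1}^{k} b_i} \leq M = \max_{i=1,\ldots,k} \frac{a_i}{b_i},
\]
which is exactly the claim. No induction or case analysis is needed; the only hypothesis used is positivity of the $b_i$'s (to preserve the inequality under division) and positivity of the $a_i$'s is not even essential for the argument, though it is stipulated.

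There is no real obstacle here; the only thing to be mindful of is to invoke positivity of the denominators explicitly so that the scalar multiplication by $b_i$ and the final division by $\sum_i b_i$ preserve the inequality direction. The proof is therefore essentially a two-line computation, and I would present it precisely in the two steps above.
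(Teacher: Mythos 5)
Your proof is correct: bounding each $a_i$ by $M\cdot b_i$, summing, and dividing by the positive quantity $\sum_i b_i$ is exactly the standard argument, and the paper itself states this fact without proof, so your two-step derivation is precisely what is intended. No issues.
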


\begin{pfof}{Lemma \ref{lemma: ConvexComb}}
Let $d$ be an arbitrary demand function. Since by assumption $H_i$'s are vertex flow sparsifiers, by definition it follows that $\con_{H_i}(d) \leq \con_{G}(d),$ $i = 1,\ldots, m$. Fix a flow $f^{i}$ and its corresponding decomposition $D^{i} = \{(p_1, f^{i}_{p_1}), (p_2, f^{i}_{p_2}) ,\ldots\}$ for routing $d$ in $H_i$, $i = 1,\ldots,m$, where $p_{\ell}$ is a path with terminal endpoints and $f^{i}_{p_{\ell}}$ is the amount of flow sent along this path. We now scale each flow decomposition $D^{i}$ and the capacities of $H_i$'s by the multiplier $\alpha_i$, $i =1, \ldots, m$. Finally, take the union over all scaled sparsifiers, i.e., $H' = \sum_{i} \alpha_i H_i$, along with their decompositions. This can be seen as re-routing the demand $d$ since $d_{st} = \sum_{i} \alpha_i d_{st} = \sum_i \sum _{P_{st} \in D^{i}} \sum_{p \in P_{st}} \alpha_i f^{i}_{p}$, for all terminal pairs $s,t \in K$. 

We need to show that $\con_{H'}(d) \leq \con_{G}(d)$. Indeed, fix an arbitrary edge $e'$ from $H'$. For $i=1,\ldots,m$, let $f^{i}(e') = \sum_{p \in D^{i}: \; e' \in p} f_p$ denote the total flow sent along edge $e'$. The congestion of $e'$ is:
\[
	\frac{\sum_{i : e' \in E_i} \alpha_i f^{i}(e')}{\sum_{i : e' \in E_i} \alpha_i c_i(e')} \leq \max_{i : e' \in E_i} \frac{\alpha_i f^{i}(e')}{\alpha_i c_{i}(e')} \leq \con_{G}(d),
\]
where the first inequality follows from the Fact \ref{2:fact1} and the last one from $H_i$ being a flow sparsifier for $G$. Since $e'$ was chosen arbitrarily, it follows that $\con_{H'}(d) \leq \con_{G}(d)$. Moreover, the fact that $d$ was chosen arbitrarily implies that $H'$ is a vertex flow sparsifier for $G$. 
\end{pfof}

\end{document}